\documentclass{llncs}
\usepackage[T1]{fontenc}
\usepackage[utf8]{inputenc}
\usepackage{graphicx}
\usepackage{amsmath}
\usepackage{xcolor}
\usepackage{xspace}
\usepackage{listings}
\usepackage[hidelinks]{hyperref}
\hypersetup{breaklinks=true}
\usepackage[detect-all]{siunitx}
\usepackage[capitalise]{cleveref}
\usepackage{float}
\usepackage{algorithm}
\usepackage[noend]{algpseudocode}
\usepackage{bytefield}
\usepackage{tikz}
\usepackage{placeins}
\usepackage{pgfplots}\pgfplotsset{compat=1.18}
\usepackage{fontawesome}
\usepackage{pifont}
\usepackage{fp}
\usepackage{array}
\usepackage{tabularx}
\usepackage{booktabs}
\usepackage{subcaption}

\usepackage{multirow}
\usepackage{colortbl}

\usepackage{enumitem}

\usepackage[normalem]{ulem}
\usepackage{comment}

\usetikzlibrary{shapes,arrows,positioning,calc,fit,patterns}
\usepgfplotslibrary{groupplots,statistics}

\newcommand{\sysname}[0]{\texorpdfstring{\textsc{Phy\-sa\-lia}}{Physalia}\xspace}
\newcommand{\sharingname}[0]{\texorpdfstring{\sysname-RCSS}{Physalia-RCSS}\xspace}

\newcommand{\outsym}[1]{\ensuremath{y_{#1}}}
\newcommand{\share}[1]{\ensuremath{s_{#1}}}

\newcommand{\sig}[1]{\ensuremath{\sigma_{#1}}}
\newcommand{\sign}[2]{\ensuremath{\mathrm{sign}_{#1}(#2)}}
\newcommand{\verify}[3]{\ensuremath{\mathrm{verify}_{#1}(#2, #3)}}

\newcommand{\RSSShare}[0]{\ensuremath{\mathsf{RSS.Share}}}
\newcommand{\RSSReconstruct}[0]{\ensuremath{\mathsf{RSS.Reconstruct}}}

\newcommand{\css}[0]{\ensuremath{\Pi}}
\newcommand{\CSSShare}[0]{\ensuremath{\mathsf{CSS.Share}}}
\newcommand{\CSSRecover}[0]{\ensuremath{\mathsf{CSS.Recover}}}

\newcommand{\redbw}[0]{\ensuremath{\tau}}

\ifdefined\SuppressAnnotations
  \newcommand{\annot}[3]{}
\else
  \newcommand{\annot}[3]{{\color{#1}{$\rule{8pt}{8pt}_{\textsf{\bfseries #2}}$ #3}}}
\fi

\ifdefined\HideChanges
  \newcommand{\del}[1]{}
  
  \newcommand{\moved}[1]{}
  \newcommand{\delfootnote}[1]{}
  \excludecomment{delblock}
\else
  \newcommand{\del}[1]{{\color{red}\sout{#1}}}
  
  \newcommand{\moved}[1]{{\color{red}\itshape[#1]}}
  \newcommand{\delfootnote}[1]{\footnote{{\color{red}\sout{#1}}}}
  
\fi

\makeatletter
\renewcommand\paragraph{\@startsection{paragraph}{4}{\z@}%
  {-6\p@ \@plus -2\p@ \@minus -2\p@}%
  {-0.5em \@plus -0.22em \@minus -0.1em}%
  {\normalfont\normalsize\itshape}}
\makeatother

\definecolor{bluefte}{HTML}{008fd5}
\definecolor{redfte}{HTML}{fc4f30}
\definecolor{yellowfte}{HTML}{e5ae38}
\definecolor{greenfte}{HTML}{6d904f}
\definecolor{greyfte}{HTML}{8b8b8b}
\definecolor{purplefte}{HTML}{810f7c}

\usepackage{graphicx}
\usepackage{xcolor}

\definecolor{greyfte}{HTML}{8b8b8b}

\definecolor{redfte}{HTML}{e41a1c}
\definecolor{bluefte}{HTML}{377eb8}
\definecolor{greenfte}{HTML}{4daf4a}
\definecolor{purplefte}{HTML}{984ea3}
\definecolor{yellowfte}{HTML}{ff7f00}

\colorlet{ctotal}{bluefte}
\colorlet{crequest}{redfte}
\colorlet{cresponse}{yellowfte}

\pgfplotsset{compat = 1.18}

\pgfplotsset{
  legend style = {font=\normalfont},
  tick label style = {font=\normalfont}
}

\pgfplotsset{dotted/.style={dash pattern=on 0pt off 2\pgflinewidth}}
\pgfplotsset{dashed/.style={dash pattern=on 2\pgflinewidth off 2\pgflinewidth}}
\pgfplotsset{dashdotdotted/.style={
    dash pattern=on 0pt off 2\pgflinewidth on 0pt off 2\pgflinewidth on 2\pgflinewidth
    off 2\pgflinewidth
  }
}
\pgfplotsset{dashdotted/.style={
    dash pattern=on 0pt off 2\pgflinewidth on 2\pgflinewidth off 2\pgflinewidth
  }
}

\tikzset{
}

\pgfplotsset{
  grid = none,
  every axis plot/.style={
    thick,
    line cap = round,
    mark options = {
      style = solid,
      fill = white,
      scale=0.8,
    },
  },
  every axis/.style={
    clip=false,
    cycle list name = bestnodash,
    legend style={draw=none},
    y label style={rotate=-90, at={(ticklabel cs:1.03)}, anchor=south west, font=\normalfont},
    x label style={at={(ticklabel cs:1.03)}, anchor=north east, font=\normalfont},
    height=5cm, width=\axisdefaultwidth,
    axis y line = left,
    axis x line = bottom,
    tick align = inside,
    axis line style = {-},  %
    axis line shift=10pt,   %
  },
  every tick/.style={
    major tick length = 0.1cm,
    minor tick length = 0.07cm,
  },
  every tick label/.style={
    font = \normalfont\rmfamily,
  },
}

\pgfplotscreateplotcyclelist{bestlist}{
  {bluefte, solid, mark = *},
  {redfte, dashed, mark = x},
  {yellowfte, dotted, mark = diamond*},
  {greenfte, dashdotted, mark = +},
  {greyfte, dashdotdotted, mark = square*},
  {purplefte, dashdotdotted, mark = square*},
}

\pgfplotscreateplotcyclelist{bestnodash}{
  {bluefte, solid, mark = *},
  {redfte, solid, mark = x},
  {yellowfte, solid, mark = diamond*},
  {greenfte, solid, mark = +},
  {greyfte, solid, mark = square*},
  {purplefte, solid, mark = asterisk},
}

\pgfplotscreateplotcyclelist{bestnomark}{
  {bluefte, solid, mark = none},
  {redfte, solid, mark = none},
  {yellowfte, solid, mark = none},
  {greenfte, solid, mark = none},
  {greyfte, solid, mark = none},
  {purplefte, solid, mark = none},
}

\pgfplotscreateplotcyclelist{shortlist}{
  {bluefte, solid, mark = o},
  {redfte, solid, mark = o},
}

\begin{document}

\title{\sysname: Redistribution-Resistant Content Protection for Decentralized Storage}

\ifdefined\Anonymous
  \author{Anonymous Author(s)}
  \institute{Anonymous Institute(s)}
\else
  \author{Giacomo Giuliari \and Karl W\"ust}
  \institute{Mysten Labs}
\fi

\maketitle

\begin{abstract}
  In decentralized storage systems, access control is often implemented by encrypting the data before
  upload and sharing the decryption key with authorized parties. A leaked key, however, makes the
  data publicly accessible, which lowers the barrier to content piracy below that of traditional
  systems, where piracy requires redistributing the full data.

  We present \sysname, an end-to-end access-control system for decentralized storage that
  secret-shares the data itself---instead of just the key---across multiple servers. Leaking the data
  then requires transmitting it in full: We formalize this intuition and introduce the
  \emph{redistribution bandwidth} an adversary must pay to leak protected content and show that
  \sysname raises it to the size of the data.

  Sharing across untrusted servers requires robustness against corrupted shares. We develop a
  robustness transform that turns any computational secret sharing scheme into a robust one and which
  is of independent interest. In contrast to existing schemes that rely on error correction, it uses
  signatures with ephemeral keys and adds only constant-size metadata per share. We show that this
  transform is secure and evaluate \sysname end-to-end on the Walrus decentralized storage system
  with an on-chain access policy.
\end{abstract}

\section{Introduction}\label{sec:intro}

Decentralized storage systems such as Filecoin~\cite{filecoin}, Arweave~\cite{williams2019arweave},
and Walrus~\cite{danezis2025walrus} promise highly available and durable storage for \emph{public
  data} and are aiming to become foundational infrastructures for data storage and distribution.

Self-sovereign content monetization by creators is often presented as one of the potential ``killer
apps'' of decentralized storage. Successful adoption, however, requires effective protection of
intellectual property. Yet, data access control---essential to intellectual property protection and
many other applications---on top of these systems has so far been overlooked. We fill this gap by
designing an access control mechanism tailored to decentralized storage systems.

A natural way to build access control on decentralized storage---e.g., for a media subscription
service (\cref{fig:key-leak-example})---is to encrypt the data and store the decryption key on a
separate server that shares it only with authenticated subscribers. While simple and efficient, this
approach suffers from two critical flaws. First, the key server is a single point of failure,
negating the benefits of decentralization. Second, and more important, the decryption key can be
easily leaked, effectively making the data publicly accessible, as anyone can then request the
encrypted data from the public storage and decrypt it locally. This lowers the barrier to piracy
significantly compared to traditional Web2 services, where content piracy requires downloading and
redistributing the full data instead of a short key. Note that, while superficially similar, this
problem is different from password sharing in subscription services (e.g., Netflix), where the
provider retains control over data access, can limit concurrent use, and passwords are generally
only shared between users who know each other. Credential sharing remains possible with any
access-control system, including ours, and is orthogonal to the leaks we address. In contrast, a
leaked decryption key with a publicly-available ciphertext enables low-effort, widespread
distribution of the data.

Secret-sharing the key as a straightforward solution, e.g., using Shamir's Secret Sharing (SSS),
across multiple key servers removes the single point of failure, but does not prevent key leakage.

\begin{figure}[t]
  \centering
  \begin{subfigure}[t]{0.43\linewidth}
    \resizebox{\linewidth}{!}{%
      \begin{tikzpicture}[
    >=latex,
    server/.style={rectangle, draw, minimum width=1.2cm, minimum height=1cm, rounded corners=2pt},
    key/.style={rectangle, draw, fill=bluefte, rounded corners=2pt, minimum width=0.8cm, minimum
        height=0.1cm},
    encdata/.style={rectangle, draw, fill=gray!30, rounded corners=2pt, minimum width=1.3cm, minimum
        height=1.1cm, align=center, font=\footnotesize},
    decdata/.style={rectangle, draw, fill=yellow!50, rounded corners=2pt, minimum width=1.3cm,
        minimum height=1.1cm, align=center, font=\footnotesize},
    storage/.style={rectangle, draw, rounded corners=2pt, minimum width=2cm, minimum height=1.8cm},
    spy/.style={font=\Huge},
    label/.style={font=\footnotesize},
  ]

  \node[spy] (subscriber) {\faUserSecret};
  \node[label, anchor=north] at (subscriber.south) {Subscriber};

  \node[key, anchor=west] (key) at (subscriber.east) {};

  \node[key, above=of key] (ks) {};
  \node[label, anchor=south] at (ks.north) {Key};

  \node[server, fit=(ks), yshift=0.2cm] (keyserver) {};
  \node[label, anchor=south] at (keyserver.north) {Key server};

  \draw[->, thick] (ks) -- (key);

  \node[decdata, right=of key] (decdata) {Dec.\\Data};

  \draw[->, redfte, thick] (key.east) -- (decdata.west) node[midway, below,
    font=\footnotesize] {Leak};

  \node[spy, anchor=west] (other) at ($(decdata.east) + (2mm,0)$) {\faUserSecret};
  \node[label, anchor=north] at (other.south) {Freerider};

  \node[encdata, above=of decdata] (encdata) {Enc.\\Data};

  \draw[->, line width=4pt] (encdata) -- (decdata);

  \node[storage, fit=(encdata)] (storage) {};
  \node[label, anchor=south, align=center] at (storage.north) {Public decen-\\tralized storage};
\end{tikzpicture}%
    }
    \caption{Naive key-based.}
    \label{fig:key-leak-example}
  \end{subfigure}
  \begin{subfigure}[t]{0.53\linewidth}
    \resizebox{\linewidth}{!}{%
      \begin{tikzpicture}[
    >=latex,
    server/.style={rectangle, draw, minimum width=1.2cm, minimum height=1cm, rounded corners=2pt,
        font=\scriptsize},
    key/.style={rectangle, draw, fill=bluefte, rounded corners=4pt, minimum width=1cm, minimum
        height=0.3cm},
    encdata/.style={rectangle, draw, fill=gray!30, rounded corners=2pt, minimum width=1.3cm, minimum
        height=1.1cm, align=center, font=\footnotesize},
    decdata/.style={rectangle, draw, fill=yellow!50, rounded corners=2pt, minimum width=1.3cm,
        minimum height=1.1cm, align=center, font=\footnotesize},
    storage/.style={rectangle, draw, rounded corners=2pt, minimum width=2cm, minimum height=1.8cm},
    spy/.style={font=\Huge},
    label/.style={font=\footnotesize},
    share/.style={rectangle, draw, fill=gray!30, rounded corners=2pt, minimum width=1.3cm, minimum
        height=0.6cm, align=center, font=\scriptsize},
    decshare/.style={share, fill=yellow!50},
  ]

  \node[spy] (subscriber) {\faUserSecret};
  \node[label, anchor=north] at (subscriber.south) {Subscriber};
  \node[decdata, anchor=west] (decdata) at ($(subscriber.east) + (2mm,0)$) {Dec.\\Data};

  \node[decshare, above=0.3cm of decdata, xshift=-1.3cm] (share1) {Dec. Share};
  \node[server, above=0.4cm of share1] (server1) {};
  \node[font=\scriptsize]  at (server1.center) {De\enspace crypt};
  \node[server, right=0.1cm of server1] (server2) {};
  \node[server, right=0.1cm of server2] (server3) {Decrypt};
  \node[label, anchor=south, xshift=0.3cm] at (server2.north) {Key servers (2 out of 3)};

  \node[decshare] (share2) at (share1 -| server3) {Dec. Share};
  \draw[->, thick] (share1) -- (decdata);
  \draw[->, thick] (share2) -- (decdata);

  \node[share, right=of server3, yshift=-0.2cm] (share_enc1) {Enc. Share};
  \node[share, anchor=south] at (share_enc1.north) (share_enc2) {Enc. Share};
  \node[share, anchor=south] at (share_enc2.north) (share_enc3) {Enc. Share};

  \draw[->, line width=2pt] (share_enc3) -| ([xshift=-1cm]share1);
  \draw[->, line width=2pt] (share_enc1) -| (share2);

  \node[storage, fit=(share_enc1)(share_enc2)(share_enc3)] (storage) {};
  \node[label, anchor=north, align=center] at (storage.south) {Public decen-\\tralized storage};

  \node[spy] at (decdata -| share_enc1) (other) {\faUserSecret};
  \node[label, anchor=north] at (other.south) {Freerider};
  \draw[->, redfte, line width=4pt] (decdata.east) -- (other.west) node[midway, below, font=\footnotesize]
  {Leak};

\end{tikzpicture}
    }
    \caption{\sysname (this paper).}
    \label{fig:system-leak-example}
  \end{subfigure}
  \caption{Comparison of access control methods on public decentralized storage, in terms of
    fault tolerance and cost of leaking data. Arrow thickness shows communication cost. In
    both scenarios, the key servers authenticate the subscriber.}
\end{figure}

To address both flaws at once, our work proposes \sysname \footnote{Named after the
  \emph{Portuguese Man O'War}, a marine organism that is made up of a colony of genetically
  identical, but morphologically distinct, smaller organisms, similar to how the data in \sysname is
  reconstructed from smaller shares.}, an end-to-end access-control system for decentralized storage.
Unlike schemes that merely share a symmetric encryption key, \sysname distributes the data itself
across multiple servers, such that the file becomes accessible only when a threshold number of
servers cooperate. This approach prevents a one-time key leak from compromising long-term access
control, and increases the cost for any authorized reader to provide access to unauthorized
parties.

While piracy is always possible by re-sharing data, leaking a decryption key is significantly
different from leaking the data itself: A key is small compared to the data, can be shared at
almost zero cost, and is easily distributed anonymously at negligible bandwidth. Leaking the
secret-shared data, by contrast, requires reconstructing and re-hosting the full content, a burden
that grows with data size and requires significantly more effort to do at scale. We call this gap
the \emph{redistribution bandwidth} of the access-control scheme, and analyze it
in~\cref{sec:redistribution-bandwidth}.

The large redistribution bandwidth is the fundamental advantage of data-level secret sharing over
key-level secret sharing for access control in public decentralized storage systems, and the
primary motivation for \sysname.

\sysname maximizes the redistribution bandwidth required to leak protected content and forces full data
reconstruction to require substantial coordination among colluding entities. Even though our
protocol can run with servers storing data directly, it also works as a modular layer atop
decentralized storage (\cref{fig:system-leak-example}): Each server holds only a decryption key for
its share, while encrypted shares reside in public storage.

Unlike information-theoretically secure schemes such as SSS, whose shares are as large as the data,
\sysname builds on computational secret sharing (CSS), such as SSMS~\cite{krawczyk1993secret} and
AONT-RS~\cite{resch2011aontrs}, whose shares have the
optimal~\cite{aureliano2025optimalcomputationalsecretsharing} size of $\approx \frac{|D|}{k}$ for
data size $|D|$ and reconstruction threshold $k$ (\cref{sec:css}).

Sharing the data across untrusted servers additionally requires \emph{robustness}: A reader must be
able to reconstruct the data even in the presence of malicious servers that may provide invalid
shares. A reader must therefore be able to detect such invalid shares before attempting to restore
the data. Otherwise, reconstruction may fail, incorrect data may be reconstructed, or
reconstruction could become prohibitively expensive since a reader may need to perform
trial-and-error decoding for all possible combinations of shares that they received (e.g. if
integrity is checked after reconstruction).

Existing robustness mechanisms can require per-share authentication metadata that grows linearly
with $n$~\cite{bellare2007robust,chen2017revisiting}, defer integrity checks until reconstruction
and thus risk combinatorial trial decoding~\cite{bellare2020reimagining}, or rely on a
pre-authenticated dealer verification key~\cite{krawczyk1993secret} (\cref{sec:relatedwork}).

As part of \sysname's design, we therefore develop a new robustness transform that avoids these
costs: The dealer signs each share with an ephemeral, per-secret key pair, embeds the public key in
each of the shares, and deletes the signing key. Readers then use the public key that at least $k$
shares agree on, and use it to verify each share \emph{before} decoding. Thus, any $k$ valid shares
suffice and $f = \min(k-1, n-k)$ byzantine servers are tolerated. The robustness transform assumes
nothing about the underlying secret sharing scheme and thus turns \emph{any} CSS scheme into a
robust one. We call the resulting robust computational secret-sharing construction \sharingname,
the cryptographic core of \sysname. We show that it preserves secrecy and guarantees robustness,
making it of interest beyond \sysname.
In summary, \textbf{this paper makes the following contributions}:
\begin{itemize}
  \item \textbf{End-to-end system.} We design \sysname, an access control system for
        decentralized storage that works by secret-sharing the data itself. Encrypted shares reside in
        public storage, a smart contract defines the access control policy---thus keeping policy
        enforcement decentralized---and $n$ servers (of which some may be byzantine) enforce this policy to provide access to their respective shares.
  \item \textbf{Redistribution bandwidth.} We introduce the redistribution bandwidth of an
        access-controlled storage system and show that it is $\approx |D|$ for \sysname---degrading
        gracefully to $\frac{k-j}{k}|D|$ under collusion with $j$ servers---whereas any key-based
        design only requires leaks of the size of a key.
  \item \textbf{Generic robustness transform.} We develop \sharingname, a generic construction that turns any
        computational secret sharing scheme into a robust one at a constant overhead per share, tolerating $f = \min(k-1, n-k)$ corrupted shares. We prove secrecy preservation and robust correctness.
  \item \textbf{End-to-end implementation.} We implement \sysname end-to-end on Walrus with an
        on-chain access policy. Our geo-distributed experiments show that \sysname can encode and
        reconstruct at multiple tens of Gbps, supporting high-throughput protection of content.
        Further, end-to-end store latency is dominated by Walrus storage and on-chain transactions, rather
        than \sharingname. The latency overhead for writes is modest, while read latency is comparable
        to---and sometimes lower than---direct Walrus reads.
\end{itemize}

\section{\sysname System Design}\label{sec:design}

In this section, we first state our system and threat model (\cref{sec:system-model}) and then
introduce the architecture of \sysname (\cref{sec:decentralized-storage}), which ties the servers,
a smart contract, and the public storage system together. To protect the shares against byzantine
servers, it makes use of robust secret sharing, which we achieve by applying our \emph{robustness
  transform} (\cref{sec:robustness-transform}) to a computational secret sharing scheme
(\cref{sec:css}).

\subsection{System and Threat Model}\label{sec:system-model}

Our system model consists of three main entities: a \emph{writer} (e.g., a content creator or
service provider), a set of \emph{servers} that enforce access control, and a set of \emph{readers}
(authorized users). They interact through a \emph{smart contract system} that holds the access
control policy and a \emph{decentralized storage system} that stores the shares. All entities in
the model are computationally bounded.

A writer is an entity that uploads data to a decentralized storage system and wants to protect it
against unauthorized access. The writer generates shares of that data, encrypts them for the
\sysname servers, uploads the encrypted shares to the storage system, and creates an access control
policy. We assume that the writer is honest and correctly follows the protocol. This is the natural
assumption since they are usually the content owner who runs the scheme to protect their own data.
An honest dealer (the writer in our case) is also standard for robust, as opposed to verifiable,
secret sharing (cf.~\cref{sec:relatedwork}).

We consider a set of $n$ servers responsible for enforcing access control to shares of the data.
Each server stores a decryption key that allows them to decrypt their assigned shares, which reside
in decentralized storage. Note that our robustness transform (cf.~\cref{sec:robustness-transform})
could be used independently to secret-share data robustly to servers that store the shares
directly.

Of these servers, $f$ may be byzantine (i.e., behave arbitrarily), while the remainder are
honest-but-curious. The threshold $k$ that is set as a system parameter determines the number $f =
  \min(k-1, n-k)$ of byzantine servers that can be tolerated for the integrity of reconstruction.
Confidentiality holds against arbitrary coalitions of fewer than $k$ servers. For the analysis of
the redistribution bandwidth (\cref{sec:redistribution-bandwidth}), we additionally assume that a
malicious reader does not collude with the servers, i.e., no server reveals the private key that
decrypts its shares; \cref{sec:redistribution-bandwidth} quantifies the effect of relaxing this
assumption.

Finally, readers retrieve the data by fetching shares from the servers. We assume that all
non-byzantine servers enforce access control checks to ensure that only authorized readers can
access the data.

\subsection{Architecture and Workflow}\label{sec:decentralized-storage}

\pgfdeclarelayer{background}
\pgfdeclarelayer{foreground}
\pgfsetlayers{background,main,foreground}

\begin{figure*}[ht]
  \centering

  \begin{tikzpicture}[
      node distance=0.35cm and 0.6cm,
      font=\small,
      >=latex,
      data/.style={rectangle, draw, minimum width=2cm, align=center,
      rounded corners=2pt},
      algo/.style={rectangle, draw, minimum width=2cm, align=center,
      rounded corners=2pt, fill=black, text=white},
      entity/.style={rectangle, draw, minimum width=2cm, align=center,
      rounded corners=2pt, fill=bluefte!50},
      steplbl/.style={font=\large},
      dashedbox/.style={draw, dashed, rounded corners=6pt, inner sep=6pt},
      arrowunderlay/.style={line width=4pt, white, draw},
    ]

    \node[data] (fileA) {Content};
    \node[algo, below=of fileA, text width=1.8cm] (shares) {Create Shares};
    \node[algo, below=of shares] (encrypt) {Encrypt};
    \node[data, right=of encrypt.south east, anchor=south west] (accesscontrol)
    {Access\\Control\\Policy};

    \node[dashedbox, fit=(fileA) (shares) (encrypt) (accesscontrol),
    label=above:Writer] (uploader) {};

    \node[entity, below=8mm of encrypt, minimum height=0.9cm] (dstore) {Decentralized\\Storage};
    \node[entity, below=8mm of accesscontrol, minimum height=0.9cm] (scontract) {Smart\\Contract};

    \node[algo, right=4.2cm of fileA.north east, anchor=north west] (server1ac) {AC Check};
    \node[algo, below=0.2cm of server1ac] (server1dec) {Decrypt};

    \begin{pgfonlayer}{background}
      \node[dashedbox, fit=(server1ac)(server1dec), label=above:{Server 1}] (server1) {};
      \draw[arrowunderlay] ($(server1dec.south)+(-0.7,0)$) --
      ($(server1dec.south)+(-0.7cm,-3.7cm)$)
      -| ($(dstore.south)+(0.1cm,0)$);
      \draw[<-, thick] ($(server1dec.south)+(-0.7,0)$) -- ($(server1dec.south)+(-0.7cm,-3.7cm)$)
      -| ($(dstore.south)+(0.1cm,0)$);
    \end{pgfonlayer}

    \node[algo, below=1.3cm of server1] (servernac) {AC Check};
    \node[algo, below=0.2cm of servernac] (serverndec) {Decrypt};

    \begin{pgfonlayer}{background}
      \node[dashedbox, fill=white, fit=(servernac)(serverndec), label={Server $n$}] (servern) {};
    \end{pgfonlayer}

    \node[font=\huge] at ($(server1)!0.4!(servern)$) {\(\vdots\)};

    \node[algo, right=1.5cm of server1ac.north east, anchor=north west] (req) {Request\\Shares};
    \node[algo, below=of req] (recon) {Reconstruct};
    \node[data, below=of recon] (fileC) {Content};
    \node[dashedbox, fit=(req) (fileC), label=above:Reader] (readerbox) {};

    \draw[->, thick] (fileA) -- (shares);
    \draw[->, thick] ($(shares.south)-(0.6,0)$) -- ($(encrypt.north)-(0.6,0)$);
    \draw[->, thick] ($(shares.south)+(0.6,0)$) -- ($(encrypt.north)+(0.6,0)$);
    \node[font=\Large] at ($(shares)!0.5!(encrypt)$) {\dots};

    \draw[arrowunderlay] ($(encrypt.south)-(0.6,0)$) -- ($(dstore.north)-(0.6,0)$);
    \draw[->, thick] ($(encrypt.south)-(0.6,0)$) -- ($(dstore.north)-(0.6,0)$);
    \draw[arrowunderlay] ($(encrypt.south)+(0.6,0)$) -- ($(dstore.north)+(0.6,0)$);
    \draw[->, thick] ($(encrypt.south)+(0.6,0)$) -- node[steplbl, right=0.1cm] {\ding{204}}
    ($(dstore.north)+(0.6,0)$);
    \node[font=\LARGE] at ($(encrypt)!0.5!(dstore)$) {\dots};

    \draw[arrowunderlay] (accesscontrol) -- (scontract);
    \draw[->, thick] (accesscontrol) -- node[steplbl, right=0.1cm] {\ding{205}} (scontract);

    \draw[arrowunderlay] ([xshift=-0.1cm, yshift=0.1cm]req.west) -- ([xshift=0.1cm,
    yshift=0.1cm]server1.east);
    \draw[->, thick] ([xshift=-0.1cm, yshift=0.1cm]req.west) -- ([xshift=0.1cm,
    yshift=0.1cm]server1.east);
    \draw[arrowunderlay] ([xshift=-0.1cm, yshift=-0.1cm]req.west) -- ([xshift=0.1cm,
    yshift=0.1cm]servern.north east);
    \draw[->, thick] ([xshift=-0.1cm, yshift=-0.1cm]req.west) -- (servern.north east);

    \draw[<-, thick] (server1ac) -- ($(server1ac.west)-(0.8cm,0)$) |-
    ($(scontract.east)+(0cm, 0.1cm)$);
    \draw[<-, thick] (servernac) -- ($(servernac.west)-(0.6cm,0)$) |- node[steplbl,
    below left=-0.05cm] {\ding{207}}
    ($(scontract.east)+(0cm, -0.1cm)$);

    \draw[arrowunderlay] ($(serverndec.south)+(-0.5cm,0)$) --
    ($(server1dec.south)+(-0.5cm,-3.9cm)$)
    -| node[steplbl] {} ($(dstore.south)+(-0.1cm,0)$);
    \draw[<-, thick] ($(serverndec.south)+(-0.5cm,0)$) -- ($(server1dec.south)+(-0.5cm,-3.9cm)$)
    -| node[steplbl, above left=0.0cm] {\ding{208}} ($(dstore.south)+(-0.1cm,0)$);

    \draw[arrowunderlay] ([xshift=0.1cm, yshift=-0.1cm]server1dec.east) -- ([xshift=-0.1cm,
    yshift=0.1cm]recon.west);
    \draw[->, thick] ([xshift=0.1cm, yshift=-0.1cm]server1dec.east) -- ([xshift=-0.1cm,
    yshift=0.1cm]recon.west);
    \draw[arrowunderlay] ([xshift=0.1cm, yshift=-0.1cm]serverndec.east) --
    ([xshift=-0.1cm,
    yshift=-0.1cm]recon.west);
    \draw[->, thick] ([xshift=0.1cm, yshift=-0.1cm]serverndec.east) -- node[steplbl,
    below=0.2cm] {\ding{209}} ([xshift=-0.1cm,
    yshift=-0.1cm]recon.west);

    \draw[->, thick] (recon) -- (fileC);

    \node[steplbl, right=0.0cm of shares] {\ding{202}};
    \node[steplbl, right=0.0cm of encrypt] {\ding{203}};
    \node[steplbl, right=0.15cm of req] {\ding{206}};
    \node[steplbl, right=0.15cm of recon] {\ding{210}};

  \end{tikzpicture}
  \caption{\sysname architecture and workflow for access control on decentralized storage.
    \ding{202}~The writer creates shares using \sharingname, \ding{203}~encrypts each share
    with the corresponding server's public key, and \ding{204}~stores encrypted shares in decentralized
    storage. \ding{205}~An access control policy is deployed to a smart contract.
    \ding{206}~When a
    reader requests access, \ding{207}~servers verify authorization via the smart contract,
    \ding{208}~retrieve their encrypted shares from storage, \ding{209}~decrypt and provide shares
    to the reader, \ding{210}~who reconstructs the original content.}

  \label{fig:decentralized-storage}
\end{figure*}

\sysname addresses the key-leak problem by secret-sharing the data itself and storing each share in encrypted form in the decentralized
storage system, where multiple servers each have access to a key that allows decryption of a
single share.
This approach ensures that (i) the data remains protected from malicious servers, and (ii) a user
needs to download and redistribute the full data to share it with third parties
(\cref{sec:redistribution-bandwidth}).
The integration of \sysname with decentralized storage systems follows the workflow illustrated in
\cref{fig:decentralized-storage}.
In the following, we sketch out the
architecture and workflow for such a system.

The writer first creates the shares using \sharingname (\cref{sec:css,sec:robustness-transform}),
acting as the dealer of the secret sharing scheme (\ding{202}~in~\cref{fig:decentralized-storage}).
They then encrypt each share using the public key of the corresponding server (\ding{203}), before
uploading each encrypted share to the decentralized storage system (\ding{204}).%

At the same time, the writer creates an access control policy (e.g., an access control list) that
defines who is authorized to access the content and deploys this policy to a smart contract
(\ding{205}). Such a policy can simply be an access control list (ACL) or it can be
programmatically defined in a smart contract.

When a reader wishes to access the content, they first authenticate themselves (depending on the
policy) to the servers and request the shares of the data (\ding{206}). The servers check if the
reader is authorized using the access control policy stored in the smart contract (\ding{207}),
and, if so, retrieve their respective encrypted share from the decentralized storage system
(\ding{208}).

Each server then decrypts the share and provides it to the reader (\ding{209}). Finally, the reader
reconstructs the original content from the shares (\ding{210}).

\subsection{Computational Secret Sharing}\label{sec:css}

Sharing the data with an information-theoretically secure scheme such as Shamir's secret
sharing~\cite{shamir1979share} would make every share as large as the data itself, for a total of
$n|D|$ stored bytes. \sysname therefore uses a \emph{computational secret sharing} scheme with
compact shares, such as SSMS~\cite{krawczyk1993secret} or AONT-RS~\cite{resch2011aontrs}. Secrecy
holds against computationally bounded adversaries, which allows shares of size $\approx
  \frac{|D|}{k}$, which is optimal~\cite{aureliano2025optimalcomputationalsecretsharing}. A $(k, n)$
CSS scheme consists of $\CSSShare(s) \rightarrow (\outsym{1}, \ldots, \outsym{n})$, which outputs
$n$ indexed shares of a secret $s$, and $\CSSRecover(T) \rightarrow s'$, which recovers the secret
from a set $T$ of indexed shares or outputs $\bot$. The scheme is \emph{complete} if any $k$ of the
shares recover $s$, and \emph{secret} if fewer than $k$ shares reveal no information about $s$ to a
computationally bounded adversary. \Cref{def:css} in \cref{sec:proofs} gives the formal definition
and \cref{def:css-ind} the corresponding game; \cref{sec:instantiations} describes two
instantiations.

\subsection{The Robustness Transform}\label{sec:robustness-transform}

Standard CSS does not suffice in the presence of byzantine servers. A malicious server can return a
modified share, and the reader cannot distinguish it from a correct one before decoding: With
corrupted shares among the responses, the reader may reconstruct incorrect data, or they have to
attempt decoding for all combinations of the received shares until a consistent subset is found,
which quickly becomes prohibitively expensive.

Existing robust schemes fall short as we discuss in~\cref{sec:relatedwork}. We therefore take a
different approach and add robustness with a novel robustness transform that uses signatures with
ephemeral, per-secret keys. We use a signature scheme that is existentially unforgeable under
chosen-message attacks (EUF-CMA), writing $\sign{sk}{m}$ for signing and $\verify{pk}{m}{\sigma}$
for verification. The robustness transform assumes nothing about the CSS scheme beyond
\cref{sec:css} and thus lifts \emph{any} CSS scheme to a robust one, which makes it of interest
beyond \sysname. It consists of two phases, \emph{Share Generation} and \emph{Reconstruction}.

A \emph{robust} CSS scheme $(\RSSShare, \RSSReconstruct)$ provides \emph{robust correctness} in
addition to the properties of a standard CSS scheme: If at most $f = \min(k-1, n-k)$ shares are
corrupted and at least $k$ original shares remain, $\RSSReconstruct$ outputs the original secret,
except with negligible probability. This is formalized in \cref{def:rss,def:rss-malleability} in
\cref{sec:proofs}.

\begin{figure*}[t]
  \centering
  \def\vdist{0.2cm} %
  \begin{tikzpicture}[%
      node distance=0.6cm,
      >=latex,
      data/.style={rectangle, draw, minimum height=3em, minimum width=4em, align=center,
      rounded corners=2pt},
      algo/.style={rectangle, draw, minimum height=3em, minimum width=4em, align=center,
      rounded corners=2pt, fill=black, text=white},
      share/.style={rectangle, draw, minimum height=1.70em, minimum width=4em,
      align=center, rounded corners=2pt, bluefte},
      secret/.style={rectangle, draw, minimum width=4em, align=center,
      rounded corners=2pt, fill=redfte},
      sharecontent/.style={rectangle, draw, minimum height=1.5em, minimum width=4em, align=center,
      rounded corners=2pt},
    ]

    \node[algo] (ec) at (0,0) {$\CSSShare$};
    \node[algo, above={2.5*\vdist} of ec, minimum height=1.5em] (sign) {Sign};
    \node[secret, left=of sign, minimum height=1.5em] (sk) {Ephemeral SK};
    \node[algo, above=\vdist of sign, minimum height=1.5em] (derive) {Derive};

    \node[fit=(ec)(sk)(derive), rounded corners=2pt, draw, inner sep=5pt,
      dashed, purplefte,
    thick, label=above:{\textcolor{purplefte}{\textbf{\sharingname}}}] (physalia) {};

    \node[data, anchor=east] (data) at ($(physalia.west |- ec)+(-0.6cm,0)$) {Data};

    \node[sharecontent, right=1.0cm of derive] (pk_zoom1) {PK};
    \node[sharecontent] (sig_zoom1) at (pk_zoom1 |- sign) {Sig. $1$};
    \node[sharecontent, below=\vdist of sig_zoom1] (idx_zoom1) {$1$};
    \node[sharecontent, below=\vdist of idx_zoom1] (sym_zoom1) {Share $1$};

    \node[fit=(idx_zoom1)(sym_zoom1), inner sep = 2pt, rounded corners=2pt, draw,
    dotted, thick] (index_symbol_box1) {};

    \node[fit=(index_symbol_box1)(sig_zoom1)(pk_zoom1), rounded corners=2pt, inner sep=4pt, draw,
    thick, share, label=above:{\textcolor{bluefte}{\textbf{Robust Share $1$}}}] (share1_box) {};

    \node[sharecontent, right=1.6cm of pk_zoom1] (pk_zoom_n) {PK};
    \node[sharecontent] (sig_zoom_n) at (pk_zoom_n |- sign) {Sig. $n$};
    \node[sharecontent, below=\vdist of sig_zoom_n] (idx_zoom_n) {$n$};
    \node[sharecontent, below=\vdist of idx_zoom_n] (sym_zoom_n) {Share $n$};

    \node[fit=(idx_zoom_n)(sym_zoom_n), inner sep = 2pt, rounded corners=2pt, draw,
    dotted, thick] (index_symbol_box_n) {};

    \node[fit=(index_symbol_box_n)(sig_zoom_n)(pk_zoom_n), rounded corners=2pt, inner
      sep=4pt, draw,
    thick, share, label=above:{\textcolor{bluefte}{\textbf{Robust Share $n$}}}] (share_n_box) {};

    \node[font=\LARGE, text=bluefte] at ($(share1_box)!0.5!(share_n_box)$) {\dots};

    \draw[white, line width=4pt] (data) -- (ec);
    \draw[->] (data) -- (ec);
    \draw[line width=4pt, white] (ec.east |- sym_zoom1.west) -- (sym_zoom1);
    \draw[->] (ec.east |- sym_zoom1.west) -- (sym_zoom1);
    \draw[->] (sk) -- (sign);
    \draw[line width=4pt, white] (sign) -- (sig_zoom1);
    \draw[->] (sign) -- (sig_zoom1);
    \draw[->] (sk) |- (derive);
    \draw[line width=4pt, white] (derive) -- (pk_zoom1);
    \draw[->] (derive) -- (pk_zoom1);
    \draw[line width=4pt, white] (index_symbol_box1.west) --
    ($(index_symbol_box1.west)+(-0.25cm,0cm)$) --
    ($(index_symbol_box1.west)+(-0.25cm,0.5cm)$) -| (sign);
    \draw[->] (index_symbol_box1.west) -- ($(index_symbol_box1.west)+(-0.25cm,0cm)$) --
    ($(index_symbol_box1.west)+(-0.25cm,0.5cm)$) -| (sign);

  \end{tikzpicture}
  \caption{Overview of the share generation in \sharingname. The underlying CSS scheme produces
    the shares, each of which the
    robustness transform signs with the randomly-generated ephemeral secret key, which is the same
    for all shares. The boxes on the right show a zoomed in view of robust shares 1 and $n$,
    illustrating their components: Public key, signature, index, and CSS share. The signature is
    over the index and CSS share.}
  \label{fig:share-encoding}
\end{figure*}

\paragraph{Share Generation.}\label{sec:share-generation-and-distribution}
The share generation phase transforms a secret $s$ (e.g., a large data blob) into $n$ \emph{robust
  shares} and distributes them to different servers. This phase, i.e., the algorithm $\RSSShare$
(shown in \cref{alg:share} in \cref{sec:pseudocode}), is illustrated in \cref{fig:share-encoding}
and consists of the following steps, executed by the dealer: The secret $s$ is first shared using
the underlying CSS scheme, producing $n$ indexed shares $(\outsym{1}, \ldots, \outsym{n}) \gets
  \CSSShare(s)$, any $k$ of which suffice to recover $s$. The dealer then generates a random
ephemeral public-private key pair $(sk, pk)$. For each share $\outsym{i}$ created in the previous
step, they sign the unsigned share $(i,\outsym{i})$ using the ephemeral private key to obtain a
signature $\sig{i} \gets \sign{sk}{(i,\outsym{i})}$. Robust share $i$ is then the tuple $\share{i}
  \gets (pk, \sig{i}, i, \outsym{i})$. After this step, the dealer deletes the ephemeral private key
$sk$ , and outputs the list of robust shares $[\share{1}, \ldots, \share{n}]$, sending each share
to the corresponding server.

Because the key pair is freshly generated for each shared secret, the private key is deleted after
sharing, and the public key travels inside the shares, no public-key infrastructure and no
long-term dealer identity are required. The verification key is established by the agreement of the
shares themselves.

\paragraph{Reconstruction.}\label{sec:reconstruction-phase}
To reconstruct the secret, a reader first requests shares from servers and waits for at least $k$
shares to be received. The reconstruction phase, i.e., the algorithm $\RSSReconstruct$ (shown in
\cref{alg:reconstruct} in \cref{sec:pseudocode}), takes as input a set of shares $S$ and outputs
either a secret $s'$ or $\bot$. If $\RSSReconstruct$ outputs $\bot$, the set $S$ does not contain a
sufficient number of valid shares. In this case, the reader requests additional shares and retries;
since at most $f = \min(k-1, n-k)$ servers are byzantine, this eventually succeeds.
$\RSSReconstruct$ consists of the following steps:

\paragraph{Public Key Agreement.} The reader checks if at least $k$ shares in the set $S$ contain the same public key $pk$. If this
is not the case, the reader outputs $\bot$. Otherwise, the reader uses this public key $pk$ to
verify the shares in the following step. Since at least $k$ shares are from honest servers and at
most $f < k$ are corrupted, this public key corresponds to the ephemeral key used by the dealer.

\paragraph{Share Verification.} For each share $\share{i} = (pk_i, \sig{i}, i, \outsym{i})$ in $S$, the reader checks that (i)
$pk_i = pk$ and (ii) $\verify{pk}{(i, \outsym{i})}{\sig{i}} = \mathrm{true}$, i.e. the public key
of the share matches the correct key and the signature is valid. If either check fails, the reader
discards the share. If fewer than $k$ shares remain, the reader outputs $\bot$.

\paragraph{CSS Recovery.} Once at least $k$ valid shares have been collected, the reader recovers the secret by running the
recovery algorithm of the underlying CSS scheme on the verified indexed shares, i.e., $s \gets
  \CSSRecover(T)$ for $T = \{(i, \outsym{i}) \mid \share{i} \text{ valid}\}$. By the completeness of
the CSS scheme (\cref{sec:css}), the recovery succeeds whenever the verified shares are authentic.

\section{Redistribution Bandwidth}\label{sec:redistribution-bandwidth}

When deployed as an access-control layer on top of a public decentralized storage system, \sysname
forces an adversary who wishes to leak the data to exfiltrate information proportional to the full
data size, rather than a short cryptographic key. We call this concept the \emph{redistribution
  bandwidth}, and further explain it by comparing \sysname (\cref{sec:decentralized-storage}) with
the natural alternative in which the data $D$ is encrypted before storing it in a public
decentralized storage system and only the decryption key is secret-shared.

\paragraph{Threat Model.}
We focus here on the \emph{reader} side of the access-control protocol: A reader $\mathcal{R}$ who
has legitimately read $D$, turns malicious and wants to provide access to $D$ to other illegitimate
readers while minimizing the amount of data that they need to share. We assume non-collusion
between $\mathcal{R}$ and the servers; in particular, servers do not reveal their private keys that
allow decryption of their respective shares (in the public decentralized storage) to $\mathcal{R}$.
We relax this assumption below, where we quantify the effect of collusion with $j < k$ servers. The
sharing of authentication credentials (e.g., a streaming service password) is orthogonal to this
analysis: It affects any access-control system, can be mitigated by standard rate-limiting and
device management, and does not enable the low-cost, large-scale distribution we aim to prevent.

\paragraph{Setup.} We consider a publicly readable decentralized storage system and a set of servers
that collectively enforce the access control policy. In \sysname (\cref{sec:decentralized-storage}),
the writer produces $n$ shares, encrypts each share $s_i$ under the corresponding server's public
key $pk_i$, and publishes the encrypted shares in the decentralized storage system. An authorized
reader retrieves $k$ decrypted shares from $k$ servers and runs reconstruction. In the \emph{naive
deployment}, the writer instead encrypts the data with a key $K$, publishes the ciphertext $E(K, D)$
on the storage layer and secret-shares the key $K$ across the same servers; an authorized reader
reconstructs $K$ and decrypts the public ciphertext.

A legitimate reader $\mathcal{R}$ has been authorized, retrieved their shares (of the data or the
key, respectively), and reconstructed $D$ or $K$, respectively. Now $\mathcal{R}$ wishes to enable a
non-authorized party $\mathcal{U}$ to read $D$ with the minimum required egress cost of $\redbw$
bits. In the naive deployment, $\redbw = |K|$ bits suffice: $\mathcal{R}$ transmits $K$, and
$\mathcal{U}$ fetches $E(K, D)$ from public storage and decrypts. In \sysname, the public storage
holds only shares encrypted under keys that $\mathcal{R}$ does not have, so $\mathcal{R}$ must
transmit essentially the full data, $\redbw \approx |D|$. The redistribution bandwidth ratio between
the two deployments is therefore $|D|/|K|$: The size of the protected content versus a single
cryptographic key. This gap does not depend on how the key is managed in the naive deployment:
Whether $K$ is held by one server, secret-shared, encapsulated with attribute-based
encryption~\cite{goyal2006attribute}, or released by threshold key-management systems such as
Seal~\cite{mystenlabs2025seal}, the public storage holds a ciphertext that any key holder can
unlock.

\paragraph{Graceful Degradation under Collusion.}
Reader--server collusion weakens the guarantee only gradually: If $j < k$ servers leak their
decryption keys, an outsider still needs the remaining fraction $\frac{k-j}{k}|D|$ of the data from
the reader. The guarantee thus degrades linearly in the number of colluding servers rather than
collapsing, and only a coalition of $k$ servers eliminates the redistribution cost entirely.

\section{Security Guarantees}\label{sec:security}

We argue informally that the robustness transform turns any secure CSS scheme (\cref{sec:css}) into
a robust secret sharing scheme; the security games, formal theorem statements, and proofs are
provided in \cref{sec:proofs}. Our analysis assumes an honest dealer (\cref{sec:system-model}) and
consists of two guarantees that hold for \emph{any} underlying CSS scheme: The transform preserves
the secrecy of the underlying scheme (\cref{thm:secrecy-preservation}), and it provides robust
correctness, by reduction to the unforgeability of the signature scheme
(\cref{thm:robust-correctness}).

\paragraph{Secrecy Preservation.} If the underlying CSS scheme is secure, so is the transformed
scheme since the transform does not introduce any additional secret-dependent values. The only
values added to a share are the ephemeral public key, which is sampled independently of the secret,
and a signature that depends only on the single share it accompanies. Any adversary against the
transformed scheme can therefore be used to construct an adversary against the underlying scheme
with the same advantage, by generating the ephemeral key pair and signing the queried shares using
the ephemeral key (\cref{thm:secrecy-preservation}).

\paragraph{Robust Correctness.} If the signature scheme is EUF-CMA-secure, at most $f = \min(k-1,
n-k)$ shares are corrupted, and at least $k$ original shares remain, reconstruction outputs the
original secret (\cref{thm:robust-correctness}). Since fewer than $k$ shares are corrupted, the
public key on which at least $k$ shares agree is the dealer's ephemeral key, and every verified
share is either original or carries a forged signature under that key. Both ways of defeating the
scheme thus require a forgery: Reconstructing a different secret requires a modified share among the
$k$ verified ones, and rejecting despite $k$ original shares contradicts the completeness of the CSS
scheme unless a modified share was accepted.

\section{Evaluation}\label{sec:evaluation}

We measure the cost of applying our robustness transform to existing CSS schemes, and compare it to
existing work. Then, we fully implement \sysname, using Walrus for decentralized storage and the Sui
blockchain as smart-contract platform, and evaluate end-to-end store and read latency.
Reproducibility details and extended results appear in \cref{app:evaluation}.

\subsection{Microbenchmarks}\label{sec:microbenchmarks}
\paragraph{Setup.}
We implement \sharingname in Rust with two underlying CSS schemes, SSMS and AONT-RS
(\cref{sec:instantiations}; \sysname{}-SSMS and \sysname{}-AONT in the figures), using
AES-128-CTR~\cite{rustctr}, BLAKE3~\cite{rustblake3}, Reed--Solomon (RS) dispersal through
\texttt{reed-solomon-simd}~\cite{rustreedsolomon}, Rayon~\cite{rustrayon} for parallelization, and
ephemeral Ed25519~\cite{bernstein2012highspeed} signatures that are computed over the BLAKE3 hash
of the share and its metadata for efficiency. We conduct two separate experiments: one measures
encoding and reconstruction throughput; the other separates base sharing time from the overhead of
the robustness transform. We compare our SSMS- and AONT-RS-based instantiations with (i) HK1,
Krawczyk's fingerprint-based robust SSMS~\cite{krawczyk1993secret,bellare2007robust}, (ii) an
HK2-style variant~\cite{bellare2007robust}, and (iii) RAONT-RS~\cite{chen2017revisiting}. In
contrast to our transform, all three may have a per-share overhead linear in $n$
(see~\cref{sec:relatedwork}).

We implement and optimize all algorithms, reusing common primitive implementations and
parallelization infrastructure wherever applicable, to ensure a fair comparison. Both experiments
run on an Apple M2 Max laptop and span $(n,k)=(3,2),(5,3),(10,6),(10,9)$, 1 and 12 parallel workers
(threads executing the local sharing computation), and plaintext sizes of 1, 10, 100, and
1,000\,MB.

\begin{figure}[!t]
  \centering
  \input{figures/micro/throughput_10_6}
  \caption{Encoding and reconstruction throughput for robustness schemes with $(n,k)=(10,6)$
    and 12 workers.}
  \label{fig:micro-throughput}
  \par\medskip
  \input{figures/micro/breakdown_10_6}
  \caption{Base sharing time and robustness overhead at 1\,GB, $(10,6)$, and 12 workers.
    The two \sysname bars use SSMS and AONT-RS, respectively.}
  \label{fig:micro-breakdown}
\end{figure}

Encoding produces $n$ shares; reconstruction receives the first or last $k$ uncorrupted shares in
index order. Our RS code is systematic, i.e., the first $k$ shares carry the dispersed data
verbatim, so first-$k$ reconstruction needs no erasure decoding, whereas last-$k$ must recover $n-k$
systematic shares: These are the best and worst cases for the dispersal layer. Both paths still
perform robustness verification and decryption. Throughput uses 20 Criterion~\cite{rustcriterion}
samples per case. The cost-breakdown experiment uses 20 paired samples, measuring total and base
time within the same invocation. Timings cover complete sharing operations, excluding networking and
access control. \Cref{app:micro-methodology} details the
hardware, timing boundaries, and HK2 variant.

\paragraph{Throughput results.}
\Cref{fig:micro-throughput} illustrates $(n,k)=(10,6)$ with 12 workers. At 1\,GB, \sysname{}-SSMS
reaches 48.3, 77.6, and 33.8\,Gbps for encoding, first-$k$, and last-$k$ reconstruction,
respectively, versus 29.8, 41.7, and 25.7\,Gbps for \sysname{}-AONT. The SSMS-based instantiation
has higher throughput across 32/32 encoding, 29/32 first-$k$, and 28/32 last-$k$ cases in the full
matrix. Encoding throughput is 1.32--1.75 times the AONT-based rate; reconstruction reversals occur
with 12 workers at 1\,MB, plus last-$k$ at 10\,MB with $(10,9)$.

AONT-RS typically has lower throughput because it additionally hashes the full ciphertext during both
encoding and reconstruction, a cost SSMS avoids by secret-sharing the short encryption key.
Regarding robustness, HK1 and HK2 exceed \sysname{}-SSMS throughput in 71 and 73 of the 96 matched
cases. This is due to the additional cost of signing each share, which in return decouples
authentication from the sharing parameters: Metadata per share is constant, each share is verified
in isolation, and any $k$ valid shares suffice (cf.~\cref{sec:relatedwork}). \Cref{app:micro-matrix}
plots all four $(n,k)$ configurations with 12 workers.

\paragraph{Robustness cost results.}
SSMS is the faster base construction in the cost-breakdown experiment. \Cref{fig:micro-breakdown}
separates base time from robustness overhead at 1\,GB, $(10,6)$, and 12 workers. Here,
\sysname{}-SSMS has reconstruction latency within 4.2\% of HK1/HK2, with an 18--19\% encoding
premium. To compare robustness on the same base, \sysname{}-AONT incurs 5\% less robustness
overhead than RAONT-RS for encoding, and 11\% and 2\% more for first-$k$ and last-$k$
reconstruction, respectively. Thus, robustness costs are comparable, while the faster SSMS base
generally improves overall performance. Cost breakdowns for all four $(n,k)$ configurations at
1\,GB with 12 workers appear in \cref{app:micro-matrix}.

\subsection{End-to-End Experiments}\label{sec:macro-experiments} \paragraph{Methodology.} We
implement \sysname end to end with the SSMS-based instantiation, using the Sui
blockchain~\cite{blackshear2023sui} for access policies and Walrus~\cite{danezis2025walrus} as the
decentralized storage system. Each share is encrypted for its server using ECIES-style hybrid
encryption: ephemeral--static X25519 key agreement, BLAKE3 key derivation, and AES-256-GCM.
We evaluate this implementation on their testnets, against a baseline
that stores each plaintext payload as one Walrus blob without any access control. Both \sysname and
the baseline contact Walrus storage nodes directly. The experimental setup is geo-distributed, with
client and servers divided across Europe and the US; deployment details appear in
\cref{app:macro-methodology}. We use the same four $(n,k)$ configurations and payload sizes as
above. We distinguish \emph{greedy} reads, which request all $n$ shares concurrently and cancel
outstanding requests once reconstruction succeeds, from \emph{lazy} reads, which initially contact
the first $k$ configured servers and request more if a request or reconstruction fails. For each
payload size and $(n,k)$ configuration, we perform ten stores and ten reads per strategy; the
baseline has ten stores and ten reads per size. Reads repeatedly access fixed objects, including the
first read. We report successful-attempt latencies from a sequential, nonrandomized run.

\begin{figure}[!t]
  \centering
  \begin{tikzpicture}
	\begin{groupplot}[
			group style={group size=2 by 1,horizontal sep=1.2cm},
			width=0.45\linewidth, height=4.1cm,
			xlabel={Data Size},
			xmin=1,
			xmax=1000,
			xmode=log, log basis x=10,
			xtick={1,10,100,1000},
			xticklabels={1MB,10MB,100MB,1GB},
			reverse legend,
			legend style={at={(0,1.02)},anchor=north west,legend columns=1,
					font=\footnotesize,legend cell align=left,draw=none,fill=none},
		]
		\nextgroupplot[ylabel={Store latency (s)},ymin=0,ymax=100,ytick={0,20,40,60,80,100}]
		\addplot[only marks,mark=*,mark size=0.8pt,color=black,opacity=0.3,mark options={solid,fill=black,draw=none},forget plot] coordinates {
				(0.922571427, 17.024649572)
				(0.939242591, 14.584686601)
				(0.956215008, 14.566622477)
				(0.973494121, 13.339752665)
				(0.991085475, 17.823388017)
				(1.008994709, 18.56626489)
				(1.027227569, 21.734326461)
				(1.045789903, 19.642566429)
				(1.064687664, 14.828885022)
				(1.083926914, 34.580197231)
				(9.225714272, 19.709636411)
				(9.392425907, 16.793931479)
				(9.562150076, 13.860573208)
				(9.734941215, 17.854948615)
				(9.910854745, 12.525759518)
				(10.08994709, 13.411086131)
				(10.272275692, 20.916330395)
				(10.45789903, 19.043895301)
				(10.646876642, 13.76552039)
				(10.83926914, 17.669564846)
				(92.257142715, 29.246441513)
				(93.92425907, 23.343560438)
				(95.621500755, 17.885472771)
				(97.349412146, 15.783537764)
				(99.108547453, 20.603647981)
				(100.899470903, 14.776187398)
				(102.722756918, 17.705736581)
				(104.5789903, 17.055078509)
				(106.468766419, 18.052440782)
				(108.392691402, 17.424016872)
				(922.571427155, 32.999344844)
				(939.2425907, 35.288406432)
				(956.215007553, 38.412333303)
				(973.494121458, 41.412993255)
				(991.085474531, 33.898662765)
				(1008.994709032, 42.790341412)
				(1027.227569183, 33.389064531)
				(1045.789903004, 42.63207302)
				(1064.68766419, 33.221692068)
				(1083.926914021, 34.312750798)
			};
		\addplot[only marks,mark=*,mark size=0.8pt,color=bluefte,opacity=0.3,mark options={solid,fill=bluefte,draw=none},forget plot] coordinates {
				(0.922571427, 29.919555447)
				(0.939242591, 35.120765528)
				(0.956215008, 21.40705891)
				(0.973494121, 20.775813902)
				(0.991085475, 22.628263177)
				(1.008994709, 28.642121612)
				(1.027227569, 19.739525766)
				(1.045789903, 22.50193444)
				(1.064687664, 28.680321645)
				(1.083926914, 31.919091413)
				(9.225714272, 20.735619581)
				(9.392425907, 30.705729968)
				(9.562150076, 21.91548686)
				(9.734941215, 27.036956159)
				(9.910854745, 23.123405053)
				(10.08994709, 19.726884239)
				(10.272275692, 17.308712362)
				(10.45789903, 20.93836984)
				(10.646876642, 23.7348435)
				(10.83926914, 19.988383208)
				(92.257142715, 20.692925831)
				(93.92425907, 18.405411265)
				(95.621500755, 25.991053531)
				(97.349412146, 20.670581756)
				(99.108547453, 20.454332461)
				(100.899470903, 22.454184507)
				(102.722756918, 28.91348746)
				(104.5789903, 22.343087808)
				(106.468766419, 20.075316291)
				(108.392691402, 31.30856892)
				(922.571427155, 62.585122202)
				(939.2425907, 48.506681781)
				(956.215007553, 49.182245429)
				(973.494121458, 85.435040158)
				(991.085474531, 66.968061563)
				(1008.994709032, 52.91084609)
				(1027.227569183, 61.531482102)
				(1045.789903004, 49.612418518)
				(1064.68766419, 56.123725931)
				(1083.926914021, 51.980255488)
			};
		\addplot[black,solid,mark=diamond*,mark size=2pt] coordinates {
				(1, 18.669133936)
				(10, 16.555124629)
				(100, 19.187612061)
				(1000, 36.835766243)
			};
		\addlegendentry{Walrus}
		\addplot[bluefte,solid,mark=*,mark size=2pt] coordinates {
				(1, 26.133445184)
				(10, 22.521439077)
				(100, 23.130894983)
				(1000, 58.483587926)
			};
		\addlegendentry{\sysname}
		\nextgroupplot[ylabel={Read latency (s)},ymode=log,log ticks with fixed point,
			ymin=1,ymax=20,ytick={1,2,5,10,20}]
		\addplot[only marks,mark=*,mark size=0.8pt,color=black,opacity=0.3,mark options={solid,fill=black,draw=none},forget plot] coordinates {
				(0.922571427, 1.608719286)
				(0.939242591, 1.465004677)
				(0.956215008, 1.786174741)
				(0.973494121, 1.872176304)
				(0.991085475, 2.490557299)
				(1.008994709, 1.572262716)
				(1.027227569, 2.660402907)
				(1.045789903, 1.954597142)
				(1.064687664, 1.527511697)
				(1.083926914, 1.913126915)
				(9.225714272, 1.679312555)
				(9.392425907, 1.791482438)
				(9.562150076, 2.097043025)
				(9.734941215, 1.972012999)
				(9.910854745, 1.83352801)
				(10.08994709, 2.101530234)
				(10.272275692, 1.654305065)
				(10.45789903, 1.843703654)
				(10.646876642, 2.34461606)
				(10.83926914, 1.708191666)
				(92.257142715, 3.313611704)
				(93.92425907, 3.218200437)
				(95.621500755, 3.323665989)
				(97.349412146, 3.208012122)
				(99.108547453, 3.45589814)
				(100.899470903, 3.095995914)
				(102.722756918, 3.635941236)
				(104.5789903, 3.221037524)
				(106.468766419, 3.209178554)
				(108.392691402, 3.589151605)
				(922.571427155, 12.79867308)
				(939.2425907, 13.500219112)
				(956.215007553, 13.101835981)
				(973.494121458, 13.268147948)
				(991.085474531, 13.411779028)
				(1008.994709032, 13.354541237)
				(1027.227569183, 12.703828761)
				(1045.789903004, 12.857116121)
				(1064.68766419, 12.92540861)
				(1083.926914021, 12.90637209)
			};
		\addplot[only marks,mark=*,mark size=0.8pt,color=bluefte,opacity=0.3,mark options={solid,fill=bluefte,draw=none},forget plot] coordinates {
				(0.922571427, 1.600139079)
				(0.939242591, 1.607870199)
				(0.956215008, 1.492527556)
				(0.973494121, 1.493915709)
				(0.991085475, 1.701726352)
				(1.008994709, 1.548391866)
				(1.027227569, 1.716437413)
				(1.045789903, 1.44500271)
				(1.064687664, 1.536045416)
				(1.083926914, 1.548757579)
				(9.225714272, 1.847038354)
				(9.392425907, 1.979625493)
				(9.562150076, 1.83028003)
				(9.734941215, 1.871148918)
				(9.910854745, 1.81286042)
				(10.08994709, 2.036719125)
				(10.272275692, 2.149622584)
				(10.45789903, 1.774079653)
				(10.646876642, 1.819867876)
				(10.83926914, 1.967797926)
				(92.257142715, 3.306708775)
				(93.92425907, 3.34903049)
				(95.621500755, 3.378906579)
				(97.349412146, 3.135719692)
				(99.108547453, 2.870119741)
				(100.899470903, 3.402286568)
				(102.722756918, 3.211285226)
				(104.5789903, 3.032191299)
				(106.468766419, 3.024620418)
				(108.392691402, 3.031046589)
				(922.571427155, 12.150677478)
				(939.2425907, 10.757893432)
				(956.215007553, 10.521929949)
				(973.494121458, 11.011462824)
				(991.085474531, 9.346128237)
				(1008.994709032, 9.898547197)
				(1027.227569183, 8.960057404)
				(1045.789903004, 8.983081862)
				(1064.68766419, 9.735622084)
				(1083.926914021, 8.952387314)
			};
		\addplot[only marks,mark=*,mark size=0.8pt,color=bluefte,opacity=0.3,mark options={solid,fill=bluefte,draw=none},forget plot] coordinates {
				(0.922571427, 2.285014174)
				(0.939242591, 2.102681513)
				(0.956215008, 2.120872613)
				(0.973494121, 2.178129104)
				(0.991085475, 1.809915589)
				(1.008994709, 2.18638376)
				(1.027227569, 2.130621527)
				(1.045789903, 2.123498626)
				(1.064687664, 1.784307748)
				(1.083926914, 1.854318855)
				(9.225714272, 2.65705656)
				(9.392425907, 3.026978114)
				(9.562150076, 3.006078431)
				(9.734941215, 2.375896803)
				(9.910854745, 2.854763222)
				(10.08994709, 2.544538849)
				(10.272275692, 3.046319135)
				(10.45789903, 2.795798343)
				(10.646876642, 2.73289666)
				(10.83926914, 2.380213261)
				(92.257142715, 4.505552842)
				(93.92425907, 4.578896246)
				(95.621500755, 4.185881399)
				(97.349412146, 4.04195568)
				(99.108547453, 4.616269947)
				(100.899470903, 5.11413553)
				(102.722756918, 4.232398035)
				(104.5789903, 4.4005213)
				(106.468766419, 4.05472136)
				(108.392691402, 3.790843676)
				(922.571427155, 13.468697734)
				(939.2425907, 13.112935555)
				(956.215007553, 12.684394358)
				(973.494121458, 12.786009623)
				(991.085474531, 12.056632454)
				(1008.994709032, 12.595222706)
				(1027.227569183, 12.424741435)
				(1045.789903004, 12.596269708)
				(1064.68766419, 12.715540834)
				(1083.926914021, 12.931896723)
			};
		\addplot[black,solid,mark=diamond*,mark size=2pt] coordinates {
				(1, 1.885053368)
				(10, 1.902572571)
				(100, 3.327069322)
				(1000, 13.082792197)
			};
		\addlegendentry{Walrus}
		\addplot[bluefte,solid,mark=*,mark size=2pt] coordinates {
				(1, 1.569081388)
				(10, 1.908904038)
				(100, 3.174191538)
				(1000, 10.031778778)
			};
		\addlegendentry{\sysname{}-greedy}
		\addplot[bluefte,dashed,mark=square*,mark size=2pt] coordinates {
				(1, 2.057574351)
				(10, 2.742053938)
				(100, 4.352117602)
				(1000, 12.737234113)
			};
		\addlegendentry{\sysname{}-lazy}
	\end{groupplot}
\end{tikzpicture}
  \caption{Store and read latency with $(n,k)=(10,6)$. Lines show means of ten successful
    attempts; faint dots are individual observations. Greedy reads request all shares concurrently;
    lazy reads initially request $k$. The read axis is logarithmic.}
  \label{fig:macro-latency}
  \par\medskip
  \begin{tikzpicture}
  \begin{groupplot}[
    group style={group size=2 by 1,horizontal sep=0.9cm},
    width=0.425\linewidth, height=4.5cm,
    ybar stacked, /pgf/bar width=11pt, /pgf/bar shift=0pt,
    reverse legend,
    ylabel style={yshift=11pt},
    ymin=0, ymax=65,
    minor ytick={65},
    xmin=-0.5, xmax=4.5, clip=false,
    x axis line style={shorten <={0.3*\pgfplotspointxaxislength},shorten >={0.1*\pgfplotspointxaxislength}},
    xtick={-0.3,1,2,3,4},
    xticklabels={Walrus,{(3,2)},{(5,3)},{(10,6)},{(10,9)}},
    x tick label style={align=center,font=\scriptsize},
    xlabel={\sysname},
    xlabel style={at={(xticklabel cs:0.6)},anchor=north,yshift=5pt,font=\scriptsize},
    legend style={at={(1.05,1)},anchor=north west,legend columns=1,
      name=breakdown-legend,font=\footnotesize,legend cell align=left,draw=none},
  ]
  \nextgroupplot[ylabel={Store latency (s) - \textbf{1MB}}]
  \addplot[fill=greenfte,draw=white,line width=0.3pt,forget plot] coordinates {
    (-0.3, 18.129120901)
    (1, 14.389209772)
    (2, 16.453092596)
    (3, 17.88242633)
    (4, 17.830679421)
  };
  \addplot[fill=greyfte,draw=white,line width=0.3pt,forget plot] coordinates {
    (-0.3, 0.5400124)
    (1, 0.41873475)
    (2, 0.382010615)
    (3, 0.364010096)
    (4, 0.438444093)
  };
  \addplot[fill=bluefte,draw=white,line width=0.3pt,forget plot] coordinates {
    (-0.3, 0)
    (1, 0.010920125)
    (2, 0.011160191)
    (3, 0.011152816)
    (4, 0.010352264)
  };
  \addplot[fill=yellowfte,draw=white,line width=0.3pt,forget plot] coordinates {
    (-0.3, 0)
    (1, 4.606932831)
    (2, 3.356128275)
    (3, 4.945270556)
    (4, 6.036685189)
  };
  \addplot[fill=purplefte,draw=white,line width=0.3pt,forget plot] coordinates {
    (-0.3, 0)
    (1, 0.00212769)
    (2, 0.002322629)
    (3, 0.002508342)
    (4, 0.001933298)
  };
  \addplot[fill=redfte,draw=white,line width=0.3pt,forget plot] coordinates {
    (-0.3, 0)
    (1, 4.294852939)
    (2, 4.434571943)
    (3, 2.928076268)
    (4, 6.247662089)
  };
  \node[font=\scriptsize,anchor=south] at (axis cs:-0.3,21.488396545) {18.7};
  \node[font=\scriptsize,anchor=south] at (axis cs:1,26.54204148) {23.7};
  \node[font=\scriptsize,anchor=south] at (axis cs:2,27.458549618) {24.6};
  \node[font=\scriptsize,anchor=south] at (axis cs:3,28.952707792) {26.1};
  \node[font=\scriptsize,anchor=south] at (axis cs:4,33.385019796) {30.6};
  \nextgroupplot[ylabel={Store latency (s) - \textbf{1GB}}]
  \addplot[fill=greenfte,draw=white,line width=0.3pt] coordinates {
    (-0.3, 36.378657684)
    (1, 39.571054338)
    (2, 45.715586575)
    (3, 45.452559667)
    (4, 33.543922481)
  };
  \addlegendentry[name=breakdown-label-walrus_store]{Walrus store}
  \addplot[fill=greyfte,draw=white,line width=0.3pt] coordinates {
    (-0.3, 0.457107495)
    (1, 0.388039247)
    (2, 0.358240351)
    (3, 0.341065705)
    (4, 0.451565836)
  };
  \addlegendentry[name=breakdown-label-walrus_client_setup]{Walrus setup}
  \addplot[fill=bluefte,draw=white,line width=0.3pt] coordinates {
    (-0.3, 0)
    (1, 2.097187017)
    (2, 2.085518241)
    (3, 1.945476295)
    (4, 1.439300933)
  };
  \addlegendentry[name=breakdown-label-share_gen]{Share generation}
  \addplot[fill=yellowfte,draw=white,line width=0.3pt] coordinates {
    (-0.3, 0)
    (1, 4.269341392)
    (2, 5.932270437)
    (3, 4.702427532)
    (4, 3.152981103)
  };
  \addlegendentry[name=breakdown-label-chain_create]{Sui create}
  \addplot[fill=purplefte,draw=white,line width=0.3pt] coordinates {
    (-0.3, 0)
    (1, 2.834931663)
    (2, 3.158707867)
    (3, 3.096494645)
    (4, 2.081093755)
  };
  \addlegendentry[name=breakdown-label-encrypt]{Encryption}
  \addplot[fill=redfte,draw=white,line width=0.3pt] coordinates {
    (-0.3, 0)
    (1, 4.103065761)
    (2, 5.399955956)
    (3, 2.945563309)
    (4, 4.437360857)
  };
  \addlegendentry[name=breakdown-label-chain_finalize]{Sui finalize}
  \node[font=\scriptsize,anchor=south] at (axis cs:-0.3,39.655028851) {36.8};
  \node[font=\scriptsize,anchor=south] at (axis cs:1,56.082882787) {53.3};
  \node[font=\scriptsize,anchor=south] at (axis cs:2,65.469542791) {62.7};
  \node[font=\scriptsize,anchor=south] at (axis cs:3,61.302850534) {58.5};
  \node[font=\scriptsize,anchor=south] at (axis cs:4,47.925488414) {45.1};
  \coordinate (breakdown-segment-walrus_store) at (axis cs:4,16.77196124);
  \coordinate (breakdown-segment-walrus_client_setup) at (axis cs:4,33.769705399);
  \coordinate (breakdown-segment-share_gen) at (axis cs:4,34.715138783);
  \coordinate (breakdown-segment-chain_create) at (axis cs:4,37.011279801);
  \coordinate (breakdown-segment-encrypt) at (axis cs:4,39.62831723);
  \coordinate (breakdown-segment-chain_finalize) at (axis cs:4,42.887544536);
  \end{groupplot}
  \draw[draw=black!35,line width=0.25pt] ([xshift=5.5pt]breakdown-segment-walrus_store) -- (breakdown-legend.west |- breakdown-label-walrus_store.west);
  \draw[draw=black!35,line width=0.25pt] ([xshift=5.5pt]breakdown-segment-walrus_client_setup) -- (breakdown-legend.west |- breakdown-label-walrus_client_setup.west);
  \draw[draw=black!35,line width=0.25pt] ([xshift=5.5pt]breakdown-segment-share_gen) -- (breakdown-legend.west |- breakdown-label-share_gen.west);
  \draw[draw=black!35,line width=0.25pt] ([xshift=5.5pt]breakdown-segment-chain_create) -- (breakdown-legend.west |- breakdown-label-chain_create.west);
  \draw[draw=black!35,line width=0.25pt] ([xshift=5.5pt]breakdown-segment-encrypt) -- (breakdown-legend.west |- breakdown-label-encrypt.west);
  \draw[draw=black!35,line width=0.25pt] ([xshift=5.5pt]breakdown-segment-chain_finalize) -- (breakdown-legend.west |- breakdown-label-chain_finalize.west);
\end{tikzpicture}
  \caption{Mean store phases over ten successful attempts; numbers give total seconds.
    Walrus storage includes registration, upload, and certification. Stack order groups
    phases rather than indicating their execution sequence.}
  \label{fig:macro-store-breakdown}
\end{figure}

\paragraph{Store and read latency results.}
\Cref{fig:macro-latency} shows that \sysname's additional store work does not translate directly
into slower reads. At 1\,GB with $(10,6)$, store latency is 58.48\,s versus Walrus's 36.84\,s, while
greedy reads take 10.03\,s versus 13.08\,s, a 23\% reduction; lazy reads take 12.74\,s. Greedy reads
also outperform Walrus at 100\,MB. We attribute this to better parallelization of the transfer: A
direct Walrus read fetches a single large blob from the set of storage nodes, whereas each \sysname
server reads a blob of size $|D|/k$ and returns it to the client, which utilizes the network more
evenly. Greedy reads additionally avoid waiting for the slowest servers. The advantage is not
universal: greedy reads with $(3,2)$ and $(5,3)$ remain slower than Walrus at 1\,GB, and stores are
slower at every tested size and configuration (\cref{app:macro-latency}). Larger $k$ also reduces
each share's payload: approximately $|D|/6$ with $(10,6)$, versus $|D|/2$ with $(3,2)$. We next
examine the extra store work in more detail.

\paragraph{Costs and $(n,k)$-layout tradeoff results.}
Walrus storage dominates store latency, accounting for 73--78\% of the 1\,GB total across all four
configurations (\cref{fig:macro-store-breakdown}). Sui creation and finalization (transactions
that create the on-chain content authorization material) come next at
7.59--11.33\,s; local share generation and encryption contribute 3.52--5.24\,s.
The choice of $(n,k)$ determines both the number of shares and their approximate payload size
$|D|/k$, and the aggregate expansion $n/k$. Configuration $(10,9)$ has the lowest 1\,GB store latency
and similar greedy-read latency to $(10,6)$, with expansion approximately $10/9$ rather than $10/6$. However, this lower redundancy tolerates one faulty share rather than four.

\section{Related Work}\label{sec:relatedwork}

\paragraph{Secret sharing.}
Shamir's Secret Sharing (SSS)~\cite{shamir1979share} is the prototypical secret sharing scheme. SSS
ensures that a secret can be shared across $n$ parties, such that $k$ of them are necessary to
recover the original secret, and does so with information-theoretic security. Its shares are as
large as the secret, and it provides no robustness. However, our robustness transform can be
applied to it as well.

\emph{Ramp secret sharing}~\cite{yamamoto1986secret,franklin1992communication} trades the
information-theoretic guarantees of SSS for smaller shares: $k$ shares reconstruct the secret, fewer
than $t < k$ shares leak nothing, and sets of size between $t$ and $k$ leak partial information.
Packed secret sharing~\cite{franklin1992communication} is an example. The share size of
approximately $\frac{1}{k-t}$ of the secret remains larger than the $\approx\frac{1}{k}$ achieved by
optimal CSS schemes.

\emph{AONT-RS}~\cite{resch2011aontrs} encodes data with an all-or-nothing transform
(AONT)~\cite{rivest1997all} followed by Reed-Solomon erasure coding, and Krawczyk's SSMS~\cite{krawczyk1993secret} encrypts the data with a random key, disperses the
ciphertext with an Information Dispersal Algorithm (IDA)~\cite{rabin1989efficient}, and shares the
key with SSS. Both are computational secret sharing schemes with share sizes of approximately
$\frac{1}{k}$ of the data, which Aureliano et
al.~\cite{aureliano2025optimalcomputationalsecretsharing} recently showed to be optimal for
computational secret sharing, and \sysname can use either as the underlying CSS scheme.

\paragraph{Robust secret sharing.}
Krawczyk~\cite{krawczyk1993secret} already provides robustness for SSMS with \emph{distributed
  fingerprints}~\cite{krawczyk1993distributed}: The dealer hashes each share and distributes the
hashes to the servers using an error-correcting code (ECC), such that a server cannot alter its
share without being detected. Bellare and Rogaway~\cite{bellare2007robust} formalize robust CSS,
prove Krawczyk's scheme (which they call HK1) secure in the random-oracle model, and replace the
hashes by commitments to obtain a variant (called HK2) that is secure under standard assumptions.
Chen et al.~\cite{chen2017revisiting} apply the same technique to AONT-RS, creating RAONT-RS. All of
these schemes inherit the drawback of the ECC, which may require a per-share authentication overhead
\emph{linear} in $n$.
To see this, consider the case where $n$ is even and $k = n/2+1$: The per-share authentication
overhead is $\approx 16n$B for all three ECC-based schemes, which is significant for large $n$,
whereas it is a fixed $96$B for \sysname. These overheads are fully analyzed and qualified
in~\cref{app:robustness-overhead}.

Krawczyk~\cite{krawczyk1993secret} also proposes dealer signatures on individual shares, which
require an authenticated dealer verification key, for example through a public-key infrastructure
(PKI). \sysname instead establishes an ephemeral verification key through agreement among the
shares, avoiding a pre-established key or PKI.

\emph{Verifiable Secret Sharing} (VSS)~\cite{rabin1989verifiable} is a stronger notion than robust secret
sharing, where parties can verify share consistency against a commitment to the secret and detect a
malicious dealer. However, VSS schemes typically require
interactive protocols and additional communication overhead, making them
less suitable for our setting, which assumes an honest dealer.

Bellare, Dai, and Rogaway~\cite{bellare2020reimagining} introduce \emph{adept secret sharing},
which adds privacy under non-uniform dealer randomness, authenticity, and error correction to
classical secret sharing. Their authenticity check operates on the reconstructed secret rather than
on individual shares, so recovery from adversarially corrupted shares may require inspecting up to
$2^n$ subsets, as the authors note; our transform verifies shares before decoding and avoids this
cost.

\paragraph{Key-Based Access Control on Public Storage.}
Proxy re-encryption~\cite{ateniese2006improved} and at\-tri\-bu\-te-based
encryption~\cite{goyal2006attribute} enforce access control by managing a decryption key. The data
resides encrypted in public storage, and the key is re-encrypted for, or encapsulated under a policy
for, authorized readers. Seal~\cite{mystenlabs2025seal} deserves a closer comparison, since it
targets the same setting as \sysname and shares its infrastructure pattern: A set of independent key
servers evaluates an on-chain policy for each request, and the protected data can live on Walrus.
The difference is what the servers protect. In Seal, the servers hold shares of a master secret from
which they derive, via threshold identity-based encryption, the decryption key for a given
ciphertext identity, and release it to authorized readers. The public storage therefore still holds
a ciphertext that any released key unlocks, so its redistribution bandwidth is $\approx |K|$
(\cref{sec:redistribution-bandwidth}). In \sysname, the data is secret-shared across the servers, so
a leaking reader must transmit $\approx |D|$.

Ideas similar to \sysname's \emph{redistribution bandwidth}
(\cref{sec:redistribution-bandwidth})---that large secrets are harder to exfiltrate than small
keys---are studied in Big-Key Cryptography~\cite{bellare2016bigkey}, which focuses on symmetric
encryption schemes with deliberately large keys to resist key exfiltration, as well as in the
Bounded Storage Model~\cite{cachin1997unconditional}, which establishes unconditional security for
key agreement under the assumption that the adversary's memory capacity is limited.

\section{Conclusion}

\sysname enables access control for decentralized storage, offering the security and high throughput
required to cater to the modern content-creation economy. \sysname increases the
\emph{redistribution bandwidth} required to leak protected data compared to existing approaches: An
adversary must exfiltrate the full content rather than a short key, turning a costless attack into
one whose cost scales with the size of the data itself. Its cryptographic core, \sharingname, adds
robustness to any computational secret-sharing scheme with constant-size authentication metadata per
share, without combinatorial trial decoding or a pre-authenticated dealer verification key. Our
end-to-end implementation on Walrus and Sui demonstrates the practicality of this approach, showing
encoding and reconstruction at tens of Gbps, and modest overhead on the geo-distributed store and
read latencies.

Because of these properties, we believe \sysname is uniquely positioned to unlock the potential of
decentralized storage systems, acting as an access control layer to publicly-stored data.

\clearpage

\bibliographystyle{splncs04}
\bibliography{ref}

\appendix
\crefalias{section}{appendix}
\crefalias{subsection}{subappendix}
\section{Pseudocode}\label{sec:pseudocode}

In this appendix, we show the pseudocode for the \sharingname algorithms. \Cref{alg:share} shows
the pseudocode for the $\RSSShare$ algorithm, and \cref{alg:reconstruct} shows the pseudocode for
the $\RSSReconstruct$ algorithm.

\begin{center}
  \begin{minipage}{0.75\textwidth}
  \begin{algorithm}[H]
    \caption{\sharingname $\RSSShare_\lambda$}
    \label{alg:share}
    \begin{algorithmic}[1]
      \Require Secret $s$
      \Ensure Shares $[\share{1}, \ldots, \share{n}]$

      \State $(\outsym{1}, \ldots, \outsym{n}) \leftarrow \CSSShare(s)$
      \Comment{Share with underlying CSS scheme}
      \State $(sk, pk) \leftarrow \text{KeyGen}(\lambda)$ \Comment{Generate ephemeral key pair}
      \For{$i = 1$ to $n$}  \Comment{Create the robust shares}
      \State $\sig{i} \leftarrow \sign{sk}{(i, \outsym{i})}$
      \State $\share{i} \leftarrow (pk, \sig{i}, i, \outsym{i})$
      \EndFor
      \State \textbf{return} $[\share{1}, \ldots, \share{n}]$
    \end{algorithmic}
  \end{algorithm}
\end{minipage}

  \vspace{1em}

  \begin{minipage}{0.75\textwidth}
  \begin{algorithm}[H]
    \caption{\sharingname $\RSSReconstruct$}
    \label{alg:reconstruct}
    \begin{algorithmic}[1]
        \Require Set of shares $S$
        \Ensure Original secret $s$ or $\bot$ (failure)

        \State $\text{pk\_counts} \leftarrow \{\}$ \Comment{Map to count public key occurrences}
        \For{$\share{i} \in S$} \Comment{Count votes for each public key}
        \State $(pk_i, \sig{i}, i, \outsym{i}) \leftarrow \share{i}$
        \State $\text{pk\_counts}[pk_i] \leftarrow \text{pk\_counts}[pk_i] + 1$
        \EndFor
        \State $pk \leftarrow \arg\max_{pk'} \text{pk\_counts}[pk']$ \Comment{Find most
        frequent public key}
        \If{$\text{pk\_counts}[pk] < k$}
        \State \textbf{return} $\bot$ \Comment{No agreement on public key}
        \EndIf
        \State $T \leftarrow \{\}$ \Comment{Verified indexed shares}
        \For{$\share{i} \in S$} \Comment{Verify shares and collect valid CSS shares}
        \State $(pk_i, \sig{i}, i, \outsym{i}) \leftarrow \share{i}$
        \If{$pk_i = pk$ \textbf{and} $\verify{pk}{(i, \outsym{i})}{\sig{i}} = \text{true}$}
        \State $T \leftarrow T \cup \{(i, \outsym{i})\}$
        \EndIf
        \EndFor
        \If{$|T| < k$}
        \State \textbf{return} $\bot$ \Comment{Not enough valid shares}
        \EndIf
        \State $s \leftarrow \CSSRecover(T)$ \Comment{Recover with underlying CSS scheme}
        \State \textbf{return} $s$
    \end{algorithmic}
  \end{algorithm}
\end{minipage}

\end{center}

\section{Security Games and Proofs}\label{sec:proofs}

In this appendix, we prove that the robustness transform turns any secure CSS scheme
(\cref{def:css}) into a robust secret sharing scheme. Our analysis assumes an honest dealer (see
\cref{sec:system-model}) and consists of two theorems that hold for \emph{any} underlying CSS
scheme: The transform preserves the secrecy of the underlying scheme, i.e., it introduces no
additional leakage (\cref{thm:secrecy-preservation}), and it provides robust correctness, by
reduction to the unforgeability of the signature scheme (\cref{thm:robust-correctness}).

\subsection{Definitions}\label{sec:definitions}

We first restate the definitions of computational and robust computational secret sharing from
\cref{sec:css,sec:robustness-transform} formally.

\begin{definition}[Computational Secret Sharing]\label{def:css}
  A $(k, n)$ \emph{computational secret sharing (CSS) scheme} with security parameter $\lambda$ is
  a tuple of efficient algorithms $(\CSSShare, \CSSRecover)$ with the following properties:
  \begin{itemize}
    \item $\CSSShare_\lambda(s) \rightarrow (\outsym{1}, \ldots, \outsym{n})$ takes a secret $s$
            and outputs $n$ indexed shares.
          \item $\CSSRecover_\lambda(T) \rightarrow s'$ takes a set of indexed shares $T \subseteq
              \{(i, \outsym{i})\}$ and outputs either a secret $s'$ or $\bot$ (failure).
          \end{itemize}
  The scheme satisfies:
  \begin{itemize}
    \item \textbf{Completeness:} For every secret $s$ and \emph{every} subset $T$ containing at
            least $k$ of the indexed shares output by $\CSSShare_\lambda(s)$,
            $\CSSRecover_\lambda(T) = s$, except with negligible probability in $\lambda$.
          \item \textbf{Secrecy:} Any set of fewer than $k$ shares reveals no information about the
          secret $s$ to a computationally bounded adversary. This is captured formally by the CSS
            indistinguishability game in \cref{def:css-ind}.
          \end{itemize}
\end{definition}

\begin{definition}[Robust Computational Secret Sharing]\label{def:rss}
  A $(k, n)$ \emph{robust computational secret sharing (RSS)} scheme
    $(\RSSShare, \RSSReconstruct)$ is a CSS scheme (\cref{def:css}) that additionally
    provides \emph{robust correctness}: If at most $f = \min(k-1, n-k)$ shares are
    corrupted and at least $k$ original shares remain, $\RSSReconstruct$ outputs the original
    secret, except with negligible probability. This is formalized by the secret-malleability game
    in \cref{def:rss-malleability}.
  \end{definition}

\subsection{Security Games}\label{sec:rss-games}

The CSS indistinguishability game captures the secrecy of a CSS scheme; it is the property the
robustness transform must preserve.
\begin{definition}[CSS Indistinguishability Game]\label{def:css-ind}
  Given a $(k, n)$ computational secret sharing scheme $(\CSSShare_\lambda, \CSSRecover_\lambda)$,
  the CSS indistinguishability (CSS-IND) game between a probabilistic polynomial-time adversary
  $\mathcal{A}$ and a challenger proceeds as follows:
  \begin{enumerate}
    \item The adversary $\mathcal{A}$ selects two secrets $s_0$ and $s_1$ of the same length and submits them
          to the challenger.
    \item The challenger samples a random bit $b \leftarrow \{0,1\}$ and computes $(\outsym{1}, \ldots,
            \outsym{n}) \gets \CSSShare_\lambda(s_b)$.
    \item The adversary is allowed to adaptively query the challenger for any subset of at most $k-1$ shares;
          for each query $i \in \{1, \ldots, n\}$, the challenger returns $\outsym{i}$.
    \item The adversary outputs a guess $b' \in \{0,1\}$.
    \item The adversary wins if $b' = b$. In this case, the game outputs $1$, otherwise it outputs $0$.
  \end{enumerate}
  The CSS scheme provides indistinguishability if for all PPT adversaries $\mathcal{A}$,
  the advantage
  \[
    \mathrm{Adv}_{\mathcal{A}} = \left| \Pr[\mathrm{CSS\text{-}IND}(\mathcal{A}) = 1] -
    \frac{1}{2} \right|
  \]
  is negligible in the security parameter $\lambda$.
\end{definition}

The secret-malleability game captures robust correctness: An adversary who corrupts up to $f$
shares can neither change the reconstructed secret nor prevent reconstruction while $k$ original
shares remain.

\begin{definition}[Secret-Malleability Game]\label{def:rss-malleability}
  Given a $(k, n)$ robust secret sharing scheme $(\RSSShare_\lambda, \RSSReconstruct_\lambda)$,
  the secret-malleability game between a probabilistic polynomial-time adversary
  $\mathcal{A}$ and a challenger proceeds as follows:
  \begin{enumerate}
    \item The adversary chooses an arbitrary secret $s$ and requests the challenger to generate a set $S$ of
          $n$ shares using $\RSSShare_\lambda(s)$.
    \item The challenger generates the shares and provides them to the adversary.
    \item The adversary produces a modified set of shares $S'$ by arbitrarily corrupting and/or dropping
          shares, subject to the constraint that at most $f$ shares are corrupted.
    \item The challenger runs $\RSSReconstruct_\lambda$ on the modified shares $S'$ to obtain a reconstructed
          secret $s'$ (which may be $\bot$).
    \item The adversary wins if either:
          \begin{enumerate}
            \item $s' \neq s$ and $s' \neq \bot$ (successful corruption), or
            \item $s' = \bot$ when $S'$ contains at least $k$ original shares (incorrect rejection)
          \end{enumerate}
  \end{enumerate}
  The scheme provides robust correctness if for all PPT adversaries $\mathcal{A}$, the
  probability that the adversary wins the game is negligible in the security
  parameter $\lambda$.
\end{definition}

\subsection{Secrecy Preservation}%
\label{sec:secrecy-preservation}

We first show that the robustness transform introduces no leakage beyond that of the underlying
scheme. The only additional values contained in the shares are the ephemeral public key and the
signatures, the former of which is computed independently of the secret, and the latter only
depends on the single share that it accompanies.%
\begin{theorem}[Secrecy Preservation]\label{thm:secrecy-preservation}
  Let $\css = (\CSSShare, \CSSRecover)$ be a $(k, n)$ CSS scheme satisfying CSS
  indistinguishability (\cref{def:css-ind}). Then the scheme obtained by applying the robustness
  transform to $\css$ also satisfies CSS indistinguishability.
\end{theorem}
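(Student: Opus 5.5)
We prove the statement by a direct black-box reduction. Suppose $\mathcal{A}$ is a PPT adversary playing the CSS-IND game (\cref{def:css-ind}) against the transformed scheme $(\RSSShare, \RSSReconstruct)$. From it we build a PPT adversary $\mathcal{B}$ against the underlying scheme $\css$ that has exactly the same advantage. Since $\css$ satisfies CSS indistinguishability, that advantage is negligible, and the theorem follows.

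\textbf{Construction of $\mathcal{B}$.} $\mathcal{B}$ first runs $\mathcal{A}$ and receives its two secrets $s_0, s_1$ of equal length. It forwards them unchanged to its own CSS-IND challenger. Next, $\mathcal{B}$ samples an ephemeral key pair $(sk, pk) \gets \text{KeyGen}(\lambda)$ itself, exactly as in \cref{alg:share}. Whenever $\mathcal{A}$ adaptively queries an index $i$, $\mathcal{B}$ proceeds as follows:
\begin{itemize}
  \item it forwards the query $i$ to its challenger and receives $\outsym{i}$;
  \item it computes $\sig{i} \gets \sign{sk}{(i,\outsym{i})}$;
  \item it returns the robust share $\share{i} = (pk, \sig{i}, i, \outsym{i})$ to $\mathcal{A}$.
\end{itemize}
$\mathcal{A}$ may ask for at most $k-1$ shares, so $\mathcal{B}$ also issues at most $k-1$ queries and stays within the rules of its own game. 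Finally, $\mathcal{B}$ outputs whatever guess $b'$ that $\mathcal{A}$ outputs. $\mathcal{B}$ is PPT because key generation and signing are efficient.

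\textbf{Perfect simulation.} The key step is to argue that $\mathcal{A}$'s view inside $\mathcal{B}$ is identically distributed to its view in the real CSS-IND game against the transformed scheme, for each value of the hidden bit $b$.

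In the real game, the challenger runs $\RSSShare(s_b)$. This first computes $(\outsym{1},\ldots,\outsym{n}) \gets \CSSShare(s_b)$, then samples $(sk,pk)$ independently of $s_b$ and of the CSS randomness, and then sets $\share{i} = (pk, \sign{sk}{(i,\outsym{i})}, i, \outsym{i})$. In the simulation, $\outsym{i}$ comes from $\CSSShare(s_b)$ as computed by $\mathcal{B}$'s challenger, while $(sk,pk)$ is sampled independently by $\mathcal{B}$. Each returned share is the same deterministic-or-randomized function of $(sk, pk, i, \outsym{i})$ as in the real game, and any signing randomness is fresh and independent in both settings.

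The one point that needs care is adaptivity. In both worlds the CSS shares are fixed once, before the queries begin, and the signatures on unqueried shares are never seen. Computing the signatures lazily, only on the queried shares, therefore yields the same joint distribution as computing all $n$ of them eagerly.

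\textbf{Conclusion.} Since the views coincide, $\Pr[\mathcal{B} \text{ wins}] = \Pr[\mathcal{A} \text{ wins}]$. Hence $\mathrm{Adv}_{\mathcal{B}} = \mathrm{Adv}_{\mathcal{A}}$, and the advantage of any PPT adversary against the transformed scheme is negligible.

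Completeness of the transformed scheme is not part of this statement. Should it be needed, it follows because honest shares carry a common $pk$ and valid signatures, so \cref{alg:reconstruct} passes any set of $k$ honest shares to $\CSSRecover$.

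I expect no real obstacle here; the only subtle point is the distributional argument above, namely that $(sk,pk)$ is independent of $s_b$ and that lazily signing only the queried shares is equivalent to eager sharing.
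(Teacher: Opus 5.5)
Your proposal is correct and matches the paper's reduction: $\mathcal{B}$ forwards $s_0, s_1$, samples its own ephemeral key pair $(sk, pk)$, and signs each CSS share obtained from its challenger. It returns the resulting robust shares and outputs $\mathcal{A}$'s guess, with perfect simulation because the key pair is independent of the secret. Your remark that lazily signing only the queried shares is equivalent to eager signing under adaptive queries is a detail the paper leaves implicit.
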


\begin{proof}
  Towards a contradiction, assume that there is a PPT adversary $\mathcal{A}$ that wins the CSS-IND
  game against the transformed scheme with non-negligible advantage. We construct an adversary
  $\mathcal{B}$ against the underlying scheme $\css$ as follows: $\mathcal{B}$ forwards the two
  secrets $s_0$ and $s_1$ chosen by $\mathcal{A}$ to its own challenger and generates an ephemeral
  key pair $(sk, pk)$. When $\mathcal{A}$ queries the share at index $i$, $\mathcal{B}$ queries its
  own challenger for the CSS share $\outsym{i}$ and returns the robust share $\share{i} = (pk,
    \sign{sk}{(i, \outsym{i})}, i, \outsym{i})$ to $\mathcal{A}$.%

  To do so, it makes at most the same $k-1$ queries as $\mathcal{A}$. Finally, $\mathcal{B}$ outputs
  the bit output by $\mathcal{A}$.

  Since the ephemeral key pair is sampled independently of the secret, exactly as in the real scheme,
  $\mathcal{B}$ perfectly simulates the CSS-IND game for the transformed scheme, and the advantage of
  $\mathcal{B}$ equals the advantage of $\mathcal{A}$. This contradicts the assumption that $\css$
  satisfies CSS indistinguishability, and thus no such adversary $\mathcal{A}$ can exist. \qed
\end{proof}

\subsection{Robust Correctness}%
\label{sec:robust-correctness}

We now show that the robustness transform provides robust correctness for any underlying CSS
scheme.%
\begin{theorem}[Robust Correctness]\label{thm:robust-correctness}
  Let $\css$ be a $(k, n)$ CSS scheme (\cref{def:css}) and let the signature scheme used by the
  robustness transform be EUF-CMA-secure. Then, no computationally bounded adversary can win the
  secret-malleability game (\cref{def:rss-malleability}%
  ) against the scheme obtained by applying
  the robustness transform to $\css$ with non-negligible probability.
\end{theorem}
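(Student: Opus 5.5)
We reduce to EUF-CMA security of the signature scheme. Fix a PPT adversary $\mathcal{A}$ for the secret-malleability game (\cref{def:rss-malleability}). Let $\mathsf{Forge}$ be the event that $S'$ contains a share $(pk^*, \sigma, i, y)$ with $pk^* = pk$, where $pk$ is the dealer's ephemeral key, such that $\verify{pk}{(i,y)}{\sigma}=\mathrm{true}$ and $(i,y) \neq (i,\outsym{i})$. Because the dealer signs exactly one message per index, namely $(i,\outsym{i})$, any such share carries a signature on a message that was never signed.

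We build a forger $\mathcal{F}$ that receives $pk$ from the EUF-CMA challenger. It runs $\mathcal{A}$ and obtains $s$. It computes $(\outsym{1},\ldots,\outsym{n}) \gets \CSSShare(s)$ and queries its signing oracle on each $(i,\outsym{i})$. It then hands $\mathcal{A}$ the shares $(pk,\sig{i},i,\outsym{i})$, which are distributed exactly as in the real game. On receiving $S'$, it outputs any share in $S'$ witnessing $\mathsf{Forge}$. Hence $\Pr[\mathsf{Forge}]$ is at most the EUF-CMA advantage of $\mathcal{F}$, which is negligible.

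It remains to show that, conditioned on $\neg\mathsf{Forge}$, $\mathcal{A}$ wins only with negligible probability. This is a combinatorial argument followed by an appeal to CSS completeness, and it is the main step.

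First we pin down key agreement. At most $f \le k-1$ shares are corrupted. A share carrying a key $pk' \neq pk$ must be corrupted, since every original share carries $pk$. So no $pk' \neq pk$ can gather $k$ votes, and if any key reaches the threshold $k$ in \cref{alg:reconstruct}, that key is $pk$. Here we must be careful about what "share" means: duplicated copies of a share count as corrupted or are removed by indexing, so we treat $S'$ as indexed by position.

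Second, under $\neg\mathsf{Forge}$, every element of $T$ equals an original $(i,\outsym{i})$. Thus $T$ is a subset of the honest dealer's output.

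Case (a), a wrong secret: if the output is $s' \neq \bot$, then $|T| \ge k$ and $T$ consists of original indexed shares. By completeness (\cref{def:css}), which holds for \emph{every} such subset, $\CSSRecover(T)=s$ except with negligible probability. So $s' \neq s$ has negligible probability.

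Case (b), incorrect rejection: suppose $S'$ contains at least $k$ original shares. They all carry $pk$, so $pk$ gets at least $k$ votes, and no other key can reach $k$ votes. We also need the $\arg\max$ to select $pk$. Any other key has at most $f \le k-1 < k$ votes, so $pk$ is the unique key with at least $k$ votes and the $\arg\max$ returns it. All $k$ original shares verify by signature correctness, so $|T|\ge k$, and completeness gives $s$ rather than $\bot$.

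A union bound over $\mathsf{Forge}$, the completeness failures, and signature correctness failures then bounds $\mathcal{A}$'s winning probability by a negligible function.

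The main subtlety I expect is the definitional handling of duplicates in $S'$: the same original share appearing twice, or a corrupted share reusing an honest index with a valid original pair. Such shares contribute only original pairs to the set $T$, so they are harmless. I would note explicitly that $T$ is a set of indexed pairs, so repetitions cannot inflate $|T|$ past the count of distinct original shares.
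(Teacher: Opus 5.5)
Your proof is correct and follows essentially the same route as the paper. Both reduce to EUF-CMA by embedding the challenger's key as the dealer's ephemeral key. Both argue that any key with $k$ votes must be the dealer's, since at most $k-1$ shares are corrupted. Both then resolve the successful-corruption and incorrect-rejection cases via CSS completeness. Your bad-event formulation is somewhat more careful than the paper's: you define a forgery as a valid signature on an \emph{unsigned message}, which makes explicit that re-signed original pairs are harmless and that plain EUF-CMA suffices. You also budget explicitly for the negligible completeness failure, which the paper's claim that $\mathcal{A}'$ ``always'' extracts a forgery glosses over.
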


\begin{proof}
  Towards a contradiction, assume that an adversary $\mathcal{A}$ exists that wins the
  secret-malleability game with non-negligible probability. We will use this adversary to construct
  an adversary $\mathcal{A}'$ that breaks the unforgeability of the underlying signature scheme with
  non-negligible probability.

  Adversary $\mathcal{A}'$ works as follows:
  \begin{enumerate}
    \item $\mathcal{A}'$ receives a public key $pk$ from the challenger of the signature
          forgery game.
    \item $\mathcal{A}'$ simulates the secret-malleability game for $\mathcal{A}$:
          \begin{itemize}
            \item When $\mathcal{A}$ requests shares for a secret $s$, $\mathcal{A}'$ generates the unsigned shares
                  $(i, \outsym{i})$ using $\CSSShare(s)$, as in the protocol description in
                  \cref{sec:robustness-transform}.
            \item $\mathcal{A}'$ uses the oracle of the signature forgery game to request
                  signatures on the unsigned shares to obtain robust shares that use the received
                  public key.
            \item $\mathcal{A}'$ provides $\mathcal{A}$ with the robust shares.
          \end{itemize}
    \item When $\mathcal{A}$ wins the game, there are two possibilities:

          \emph{Case 1 (Successful Corruption):} If $\mathcal{A}$ produces $k$ valid
          shares that reconstruct to $s' \neq s$, then all $k$ of them are verifiable with
          the same public key. Since $\mathcal{A}$ can only replace fewer
          than $k$ shares (since $f = \min(k-1, n-k)$), at least one of the
          $k$ shares used for reconstruction must be an original share signed with the
          original key from the signature oracle and thus, the key $pk'$ used for all $k$
          shares must be the key $pk$ received from the challenger of the signature
          forgery game. Since the $k$ shares reconstruct to a different secret, at least
          one of them is not an original share, and thus is a message/signature pair that
          was never requested from the signing oracle.

          \emph{Case 2 (Incorrect Rejection):} Since at least $k$ shares are original,
          $\mathcal{A}$ must have provided at least $k$ shares that are not discarded,
          i.e. they agree on the same public key $pk$ and have a valid signature under
          that key, otherwise the reconstruction succeeds trivially. However,
          reconstruction from these shares fails, even though, by the completeness of the
          underlying CSS scheme (\cref{def:css}), $\CSSRecover$ outputs the original
          secret on any set of at least $k$ authentic indexed shares (except with
          negligible probability). Therefore, one of the verified shares differs from the
          original. This share is again a message/signature pair that was never requested
          from the signing oracle.

    \item $\mathcal{A}'$ outputs the message/signature pair obtained in the previous step.
  \end{enumerate}

  The adversary $\mathcal{A}'$ can therefore always extract a forged signature if $\mathcal{A}$ wins
  the secret-malleability game, and wins the signature forgery game with the same non-negligible
  probability with which $\mathcal{A}$ wins the secret-malleability game.

  This contradicts our assumption that the underlying signature scheme is unforgeable, and thus no
  such adversary $\mathcal{A}$ exists. \qed
\end{proof}

\subsection{Instantiations}\label{sec:instantiations}

Two established CSS schemes for large data satisfy \cref{def:css} and produce shares of size
$\approx \frac{|s|}{k}$; we use both in our evaluation (\cref{sec:evaluation}). In
\emph{AONT-RS}~\cite{resch2011aontrs}, $\CSSShare$ applies an all-or-nothing transform
(AONT)~\cite{rivest1997all} to the secret, which is padded to at least $k$ blocks such that every
erasure-code input share contains at least one AONT block, and encodes the result with an $(n, k)$
maximum distance separable (MDS) erasure code; $\CSSRecover$ erasure-decodes the shares and inverts
the AONT. In \emph{SSMS}~\cite{krawczyk1993secret}, $\CSSShare$ encrypts the secret with a freshly
sampled symmetric key $K$, disperses the ciphertext with an $(n, k)$ MDS code (an information
dispersal algorithm~\cite{rabin1989efficient}), and shares $K$ with Shamir's secret
sharing~\cite{shamir1979share}; share $\outsym{i}$ consists of the $i$-th ciphertext share and the
$i$-th key share, and $\CSSRecover$ interpolates $K$ from any $k$ key shares, erasure-decodes the
ciphertext, and decrypts it. Completeness (\cref{def:css}) of both schemes follows directly, since
MDS decoding from any $k$ shares, AONT decoding, Shamir interpolation, and decryption are
deterministic.

Since \cref{thm:secrecy-preservation,thm:robust-correctness} hold for any secure CSS scheme, the
secrecy of the two schemes we use for our evaluation, SSMS~\cite{krawczyk1993secret} and
AONT-RS~\cite{resch2011aontrs}, is inherited from the underlying schemes rather than proven from
scratch: For SSMS, secrecy was shown by Krawczyk~\cite{krawczyk1993secret} and formally analyzed by
Bellare and Rogaway~\cite{bellare2007robust}; for AONT-RS, we rely on the analysis by Chen et
al.~\cite{chen2017revisiting}, whose information-leakage attack for small secrets does not apply
here because of the padding to at least $k$ blocks.

\section{Additional Evaluation Details}\label{app:evaluation}

\subsection{Microbenchmark Methodology}\label{app:micro-methodology}
\paragraph{Hardware.}
All microbenchmarks, including the throughput and cost-breakdown experiments, run on an Apple M2
Max.

\paragraph{Comparison implementations.}
HK1, the HK2-style variant, and RAONT-RS are implemented in Rust with the same symmetric
encryption, hashing, and dispersal primitives as \sharingname. They authenticate shares through
fingerprints or commitments encoded with an error-correcting code, rather than signatures. HK1 uses
deterministic BLAKE3 fingerprints. The implementation labeled HK2 uses BLAKE3 with a 16-byte random
opening in place of the statistically hiding commitment required by the formal HK2 construction.
Its performance therefore describes this variant, not an implementation of that construction's full
guarantees.

\paragraph{Cases and inputs.}
Both experiments cover $(n,k)=(3,2),(5,3),(10,6),(10,9)$, 1 and 12 Rayon workers, and 1, 10, 100,
and 1,000\,MB, with $1\,\mathrm{MB}=10^6$ bytes. The three operations are encoding, first-$k$
reconstruction, and last-$k$ reconstruction. The throughput experiment contains 672 cases across
seven schemes (two non-robust schemes are omitted from the plots); the separate cost-breakdown
experiment contains 480 across the five robust schemes. Uncorrupted shares are valid encoder outputs
without adversarial modification. For the SSMS- and AONT-RS-based schemes, the first $k$ shares
contain systematic dispersal shards, avoiding erasure reconstruction of the bulk data. In these
layouts, the last $k$ omit $n-k$ systematic shards. Both paths still authenticate shares in the
robust schemes and decrypt the data; SSMS also reconstructs its key. These are not Byzantine
best/worst cases.

\paragraph{Throughput timing.}
Criterion uses 20 samples per case with flat sampling. Each timed invocation receives a copy of a
fixed, correctness-checked input or share set. Algorithm-internal allocations, randomness, and
authentication are included; input generation and copying, correctness checks, output destruction,
and thread-pool setup and dispatch are excluded. Timing covers only the sharing layer, before
serialization, per-server encryption of shares, networking, and access control. Throughput counts
original plaintext bits, not encoded size, in decimal Gbps ($10^9$ bits/s). Confidence intervals
describe within-run sampling uncertainty, not cross-run variation.

\paragraph{Cost breakdown and controls.}
The independent cost-breakdown experiment collects 20 paired samples per case. Each invocation
records its total duration and non-overlapping base-operation spans. The SSMS base includes
encryption or decryption, ciphertext dispersal or recovery, and key sharing or reconstruction. The
AONT-RS base includes the AONT, its ciphertext hash, and dispersal or recovery. Robustness overhead
is total time minus base time within the same invocation; it includes signature generation or
verification (commitment encoding or error-correction decoding for the baselines), share
verification, and assembling the robust shares. Bootstrap intervals use 5,000 resamples of total
time; they are not sums of phase intervals. We do not subtract or combine timings from independently
benchmarked schemes or the throughput run.

We interleave calls through the uninstrumented API to check measurement perturbation. Across all
480 cases, the absolute difference between profiled and control mean totals, relative to the
control mean, has median 0.24\% and maximum 5.39\%. These differences can include timing noise and
do not measure robustness overhead. The paired experiment's one encoding case where our SSMS total
exceeds our AONT total is $(3,2)$, 12 workers, 100\,MB. Across all SSMS cases, robustness overhead
accounts for 38.8--84.0\% of first-$k$ time and 11.8--51.2\% of last-$k$ time. Constant-size
authentication metadata does not imply constant authentication time: hashing still processes share
payloads.

\paragraph{Additional throughput comparisons.}
Across the full matrix, our SSMS instantiation exceeds RAONT-RS throughput in all 32 encoding cases
and 28 of 32 cases for each reconstruction operation. First-$k$ exceeds last-$k$ throughput for
both of our instantiations in all 32 matched cases. In the $(10,6)$, 12-worker, 1\,GB selection,
SSMS first-$k$ and last-$k$ rates are 77.6 and 33.8\,Gbps, compared with 41.7 and 25.7\,Gbps for
AONT-RS; SSMS is within 6\% of HK1 and HK2 for all three operations. At 1\,MB in that selection,
first-$k$ instead reaches 10.4\,Gbps with SSMS versus 13.7\,Gbps with AONT-RS. These examples do
not replace the full-matrix comparisons in the main text.

\subsection{Microbenchmarks Across All Configurations}\label{app:micro-matrix}

\Cref{fig:micro-all-3-2,fig:micro-all-5-3,fig:micro-all-10-6,fig:micro-all-10-9} extend the
main-text plots to all four $(n,k)$ configurations with 12 workers. Each figure shows throughput
across all four payload sizes above the cost breakdown at 1\,GB. Columns show encoding,
first-$k$, and last-$k$ reconstruction, using uncorrupted shares. The five robust schemes and
axis limits are consistent across configurations. The two \sysname bars in each breakdown use
SSMS and AONT-RS, respectively; their upper segments show robustness overhead.

Throughput whiskers show Criterion 95\% confidence intervals; breakdown whiskers show 95\%
bootstrap intervals for mean total time from the independent cost-breakdown experiment. The
full-matrix comparisons in the main text also include single-worker measurements and all payload
sizes in the cost-breakdown experiment, beyond the selections plotted here. Counts of higher
estimates describe the observed ordering, not statistical significance.

\begingroup
\pgfplotsset{legend to name/.code={\pgfkeyssetvalue{/pgfplots/legend to name}{appendix-#1}}}
\let\evaluationlegendfromname\pgfplotslegendfromname
\renewcommand{\pgfplotslegendfromname}[1]{\evaluationlegendfromname{appendix-#1}}

\begin{figure}[!htbp]
  \centering
  \input{figures/micro/throughput_3_2}
  \par\medskip
  \input{figures/micro/breakdown_3_2}
  \caption{Throughput (top) and 1\,GB cost breakdown (bottom) for $(n,k)=(3,2)$ with 12 workers.}
  \label{fig:micro-all-3-2}
\end{figure}

\begin{figure}[!htbp]
  \centering
  \input{figures/micro/throughput_5_3}
  \par\medskip
  \input{figures/micro/breakdown_5_3}
  \caption{Throughput (top) and 1\,GB cost breakdown (bottom) for $(n,k)=(5,3)$ with 12 workers.}
  \label{fig:micro-all-5-3}
\end{figure}

\begin{figure}[!htbp]
  \centering
  \input{figures/micro/throughput_10_6}
  \par\medskip
  \input{figures/micro/breakdown_10_6}
  \caption{Throughput (top) and 1\,GB cost breakdown (bottom) for $(n,k)=(10,6)$ with 12 workers.}
  \label{fig:micro-all-10-6}
\end{figure}

\begin{figure}[!htbp]
  \centering
  \input{figures/micro/throughput_10_9}
  \par\medskip
  \input{figures/micro/breakdown_10_9}
  \caption{Throughput (top) and 1\,GB cost breakdown (bottom) for $(n,k)=(10,9)$ with 12 workers.}
  \label{fig:micro-all-10-9}
\end{figure}

\FloatBarrier
\endgroup

\subsection{End-to-End Methodology}\label{app:macro-methodology}
\paragraph{Deployment and baseline.}
The client and each \sysname server run on AWS \texttt{m7i.2xlarge} instances. The client is in
\texttt{eu-west-1}; servers are assigned round-robin to \texttt{us-west-1}, \texttt{us-east-1}, and
\texttt{eu-west-1}. Each layout uses the first $n$ servers, giving the ten-server layout four
servers in US West, three in US East, and three in Europe. Each operation starts a fresh client
process; servers remain running. The prototype maintains access policies on Sui testnet and stores
encrypted shares on Walrus testnet. The baseline stores plaintext as a single Walrus blob through
the same SDK, without \sysname's access-control layer. Both contact storage nodes directly, without
an intermediate publisher or aggregator.

\paragraph{Walrus configuration.}\label{app:walrus-config}
Across all layouts, we use the SDK's balanced profile for \sysname (up to 1,000 concurrent writes
and 512\,MB in flight) and its aggressive profile for the baseline (2,000 writes and 1.25\,GB),
without overriding concurrency settings. For each system, the chosen profile performed better than
the other one. Both request deletable storage for one epoch and disable reuse of existing blob and
storage resources. The comparison reflects both system design and resource allocation, not an
isolated SDK-profile effect.

\paragraph{Procedure.}
Each store receives a newly generated random payload. Each size/layout has ten stores, ten greedy
reads, and ten lazy reads. Each strategy reuses one stored object, with distinct objects for greedy
and lazy reads. The baseline has ten stores and ten reads per size, reusing one blob for reads. All
observations, including first reads, are retained: these are repeated accesses to fixed objects,
not independent trials on newly stored content. Greedy requests all $n$ shares concurrently and
cancels remaining requests once reconstruction succeeds. Lazy initially contacts the first $k$
configured servers and requests more if a request or reconstruction fails.

The 560 operations execute sequentially, grouped by size and layout, with baseline operations
interleaved and no randomized ordering. The controller permits up to five attempts per operation
with a 10\,s delay. All operations completed; one store required a second attempt after a Sui RPC
timeout. The run uses software revision \texttt{b75369b6}.

\paragraph{Timing boundaries.}
We report client-side wall-clock time for successful attempts. A \sysname store includes share
generation and per-server encryption, Walrus client initialization and storage, and creation and
finalization of the Sui content object. A baseline store includes Walrus client initialization and
storage. Reads include client initialization and complete-payload retrieval, plus server-side
authorization and reconstruction for \sysname. Payload generation, process launch, and result
logging are excluded, as are failed attempts and controller retry delays. SDK retries within
successful attempts remain included. Plots show arithmetic means and individual observations. These
sequential testnet measurements do not evaluate concurrent throughput or injected failures.

\paragraph{Additional observations.}
For $(10,6)$ at 1\,GB, the response completing reconstruction came from US East in all ten greedy
reads and US West in all ten lazy reads. This observation depends on placement and request order;
it does not separate network effects from read strategy. At 100\,MB, mean greedy latency was
3.17\,s versus Walrus's 3.33\,s. Across layouts at 1\,GB, Sui creation and finalization take
7.59--11.33\,s, while share generation and encryption take 3.52--5.24\,s. In the store breakdown,
Walrus setup denotes SDK client initialization; Walrus storage includes registration, upload, and
certification.

\subsection{Latency Across All Layouts}\label{app:macro-latency}

\Cref{fig:macro-read-latency,fig:macro-store-latency} report all four layouts from the same run and
measurement procedure as \cref{sec:macro-experiments}. The main-text comparison in
\cref{fig:macro-latency} selects $(n,k)=(10,6)$ from these measurements without changing the
samples.

\begin{figure}[!htbp]
  \centering
  \input{figures/eval-macro-read-latency}
  \caption{End-to-end read latency by layout. Lines show means of ten successful attempts; faint
    dots show individual observations with horizontal jitter. First reads are included. The same
    Walrus observations appear in each panel. Both axes are logarithmic.}
  \label{fig:macro-read-latency}
\end{figure}

\begin{figure}[!htbp]
  \centering
  \begin{tikzpicture}
  \begin{axis}[
    height=5.5cm, width=0.72\linewidth,
    ylabel={Store latency (s)},
    xlabel={Data Size},
    xmode=log, log basis x=10,
    xmin=0.769230769, xmax=1300,
    xtick={1,10,100,1000},
    xticklabels={1MB,10MB,100MB,1GB},
    ymin=0, ymax=98.250296182,
    legend style={at={(1.05,1)},anchor=north west,legend columns=1,
      font=\footnotesize,legend cell align=left,draw=none},
  ]
  \addplot[only marks,mark=*,mark size=0.8pt,color=black,opacity=0.3,mark options={solid,fill=black,draw=none},forget plot] coordinates {
    (0.922571427, 17.024649572)
    (0.939242591, 14.584686601)
    (0.956215008, 14.566622477)
    (0.973494121, 13.339752665)
    (0.991085475, 17.823388017)
    (1.008994709, 18.56626489)
    (1.027227569, 21.734326461)
    (1.045789903, 19.642566429)
    (1.064687664, 14.828885022)
    (1.083926914, 34.580197231)
    (9.225714272, 19.709636411)
    (9.392425907, 16.793931479)
    (9.562150076, 13.860573208)
    (9.734941215, 17.854948615)
    (9.910854745, 12.525759518)
    (10.08994709, 13.411086131)
    (10.272275692, 20.916330395)
    (10.45789903, 19.043895301)
    (10.646876642, 13.76552039)
    (10.83926914, 17.669564846)
    (92.257142715, 29.246441513)
    (93.92425907, 23.343560438)
    (95.621500755, 17.885472771)
    (97.349412146, 15.783537764)
    (99.108547453, 20.603647981)
    (100.899470903, 14.776187398)
    (102.722756918, 17.705736581)
    (104.5789903, 17.055078509)
    (106.468766419, 18.052440782)
    (108.392691402, 17.424016872)
    (922.571427155, 32.999344844)
    (939.2425907, 35.288406432)
    (956.215007553, 38.412333303)
    (973.494121458, 41.412993255)
    (991.085474531, 33.898662765)
    (1008.994709032, 42.790341412)
    (1027.227569183, 33.389064531)
    (1045.789903004, 42.63207302)
    (1064.68766419, 33.221692068)
    (1083.926914021, 34.312750798)
  };
  \addplot[only marks,mark=*,mark size=0.8pt,color=bluefte,opacity=0.3,mark options={solid,fill=bluefte,draw=none},forget plot] coordinates {
    (0.922571427, 23.286174525)
    (0.939242591, 20.592379493)
    (0.956215008, 28.667119361)
    (0.973494121, 23.612610477)
    (0.991085475, 17.739298887)
    (1.008994709, 21.387153597)
    (1.027227569, 30.292512228)
    (1.045789903, 24.990917205)
    (1.064687664, 20.691958998)
    (1.083926914, 25.967663944)
    (9.225714272, 19.529162875)
    (9.392425907, 30.937958191)
    (9.562150076, 25.068457764)
    (9.734941215, 37.030636742)
    (9.910854745, 22.596192367)
    (10.08994709, 31.968557428)
    (10.272275692, 24.275066097)
    (10.45789903, 37.495626597)
    (10.646876642, 20.920178336)
    (10.83926914, 23.013361031)
    (92.257142715, 24.011983547)
    (93.92425907, 25.770292819)
    (95.621500755, 26.804827771)
    (97.349412146, 27.05415339)
    (99.108547453, 29.526451717)
    (100.899470903, 28.206556559)
    (102.722756918, 29.242525086)
    (104.5789903, 28.022591357)
    (106.468766419, 26.583365892)
    (108.392691402, 23.138238093)
    (922.571427155, 42.246940954)
    (939.2425907, 47.740781943)
    (956.215007553, 62.501585326)
    (973.494121458, 56.971087699)
    (991.085474531, 50.058619502)
    (1008.994709032, 70.028625621)
    (1027.227569183, 50.579362308)
    (1045.789903004, 50.100003466)
    (1064.68766419, 48.244645765)
    (1083.926914021, 54.164549208)
  };
  \addplot[only marks,mark=*,mark size=0.8pt,color=bluefte,opacity=0.3,mark options={solid,fill=bluefte,draw=none},forget plot] coordinates {
    (0.922571427, 27.075758157)
    (0.939242591, 23.481143533)
    (0.956215008, 21.784875601)
    (0.973494121, 26.146468285)
    (0.991085475, 36.020762742)
    (1.008994709, 15.405284494)
    (1.027227569, 18.398998308)
    (1.045789903, 22.979331393)
    (1.064687664, 22.774087302)
    (1.083926914, 32.326160286)
    (9.225714272, 33.259903509)
    (9.392425907, 17.879339408)
    (9.562150076, 19.938041222)
    (9.734941215, 21.576238525)
    (9.910854745, 23.122140664)
    (10.08994709, 25.260618763)
    (10.272275692, 24.265037764)
    (10.45789903, 35.479791105)
    (10.646876642, 19.046101078)
    (10.83926914, 21.552153366)
    (92.257142715, 20.853951014)
    (93.92425907, 25.750895676)
    (95.621500755, 24.238730418)
    (97.349412146, 26.821132825)
    (99.108547453, 31.894412737)
    (100.899470903, 19.638754076)
    (102.722756918, 27.824709387)
    (104.5789903, 27.190492295)
    (106.468766419, 25.836235943)
    (108.392691402, 20.125816249)
    (922.571427155, 65.70179137)
    (939.2425907, 66.358165256)
    (956.215007553, 46.606517279)
    (973.494121458, 61.76900005)
    (991.085474531, 74.040627249)
    (1008.994709032, 69.012153647)
    (1027.227569183, 62.966138122)
    (1045.789903004, 63.159367082)
    (1064.68766419, 59.951003803)
    (1083.926914021, 56.938037966)
  };
  \addplot[only marks,mark=*,mark size=0.8pt,color=bluefte,opacity=0.3,mark options={solid,fill=bluefte,draw=none},forget plot] coordinates {
    (0.922571427, 29.919555447)
    (0.939242591, 35.120765528)
    (0.956215008, 21.40705891)
    (0.973494121, 20.775813902)
    (0.991085475, 22.628263177)
    (1.008994709, 28.642121612)
    (1.027227569, 19.739525766)
    (1.045789903, 22.50193444)
    (1.064687664, 28.680321645)
    (1.083926914, 31.919091413)
    (9.225714272, 20.735619581)
    (9.392425907, 30.705729968)
    (9.562150076, 21.91548686)
    (9.734941215, 27.036956159)
    (9.910854745, 23.123405053)
    (10.08994709, 19.726884239)
    (10.272275692, 17.308712362)
    (10.45789903, 20.93836984)
    (10.646876642, 23.7348435)
    (10.83926914, 19.988383208)
    (92.257142715, 20.692925831)
    (93.92425907, 18.405411265)
    (95.621500755, 25.991053531)
    (97.349412146, 20.670581756)
    (99.108547453, 20.454332461)
    (100.899470903, 22.454184507)
    (102.722756918, 28.91348746)
    (104.5789903, 22.343087808)
    (106.468766419, 20.075316291)
    (108.392691402, 31.30856892)
    (922.571427155, 62.585122202)
    (939.2425907, 48.506681781)
    (956.215007553, 49.182245429)
    (973.494121458, 85.435040158)
    (991.085474531, 66.968061563)
    (1008.994709032, 52.91084609)
    (1027.227569183, 61.531482102)
    (1045.789903004, 49.612418518)
    (1064.68766419, 56.123725931)
    (1083.926914021, 51.980255488)
  };
  \addplot[only marks,mark=*,mark size=0.8pt,color=bluefte,opacity=0.3,mark options={solid,fill=bluefte,draw=none},forget plot] coordinates {
    (0.922571427, 25.335483936)
    (0.939242591, 44.728607737)
    (0.956215008, 54.096071271)
    (0.973494121, 23.906181982)
    (0.991085475, 32.530724991)
    (1.008994709, 26.43565713)
    (1.027227569, 25.670270252)
    (1.045789903, 22.902834283)
    (1.064687664, 27.952494325)
    (1.083926914, 22.099245966)
    (9.225714272, 23.110649516)
    (9.392425907, 23.717597056)
    (9.562150076, 19.342678063)
    (9.734941215, 21.049428837)
    (9.910854745, 32.12107477)
    (10.08994709, 27.337304329)
    (10.272275692, 22.969468773)
    (10.45789903, 29.063204219)
    (10.646876642, 31.052697687)
    (10.83926914, 28.836679484)
    (92.257142715, 20.563154178)
    (93.92425907, 34.245049956)
    (95.621500755, 19.277244394)
    (97.349412146, 18.499445698)
    (99.108547453, 18.856385812)
    (100.899470903, 22.073215763)
    (102.722756918, 21.905692049)
    (104.5789903, 19.853138761)
    (106.468766419, 21.848217009)
    (108.392691402, 18.257528906)
    (922.571427155, 49.528243423)
    (939.2425907, 43.345165513)
    (956.215007553, 43.65938492)
    (973.494121458, 46.981000313)
    (991.085474531, 39.062466689)
    (1008.994709032, 48.542518126)
    (1027.227569183, 39.817983981)
    (1045.789903004, 50.203614141)
    (1064.68766419, 51.567510237)
    (1083.926914021, 38.354370717)
  };
  \addplot[black,solid,mark=diamond*,mark size=2pt] coordinates {
    (1, 18.669133936)
    (10, 16.555124629)
    (100, 19.187612061)
    (1000, 36.835766243)
  };
  \addlegendentry{Walrus}
  \addplot[bluefte,dotted,mark=triangle*,mark size=2pt] coordinates {
    (1, 23.722778872)
    (10, 27.283519743)
    (100, 26.836098623)
    (1000, 53.263620179)
  };
  \addlegendentry{RS(3,2)}
  \addplot[bluefte,dashdotted,mark=diamond*,mark size=2pt] coordinates {
    (1, 24.63928701)
    (10, 24.13793654)
    (100, 25.017513062)
    (1000, 62.650280182)
  };
  \addlegendentry{RS(5,3)}
  \addplot[bluefte,densely dashed,mark=pentagon*,mark size=2pt] coordinates {
    (1, 26.133445184)
    (10, 22.521439077)
    (100, 23.130894983)
    (1000, 58.483587926)
  };
  \addlegendentry{RS(10,6)}
  \addplot[bluefte,loosely dotted,mark=o,mark size=2pt] coordinates {
    (1, 30.565757187)
    (10, 25.860078273)
    (100, 21.537907253)
    (1000, 45.106225806)
  };
  \addlegendentry{RS(10,9)}
  \end{axis}
\end{tikzpicture}
  \caption{End-to-end store latency. Lines show means of ten successful attempts per size and
    configuration; faint dots show individual observations. The data-size axis is logarithmic.}
  \label{fig:macro-store-latency}
\end{figure}

\FloatBarrier

\clearpage
\section{Robustness Metadata Overhead}\label{app:robustness-overhead}

We count authentication bytes per share, excluding payload, base key shares,
indices, and framing. Matching $f=\min(k-1,n-k)$ corruptions requires RS dimension
$d\leq n-2f$; we use $d=n-2f$. Unlike the baselines, we retune the code and
optimize the published layout~\cite{bellare2007robust,chen2017revisiting} by
regenerating the local fingerprint fragment instead of storing it. A faulty owner
still contributes only one erroneous position. With $32$-byte hashes/commitments
and $16$-byte openings, each remaining fragment takes $\lceil32/d\rceil$ bytes, giving
\begin{equation}\label{eq:metadata-padded}
  \begin{aligned}
    M_{\mathrm{HK1}}&=(n-1)\left\lceil\frac{32}{d}\right\rceil,
    &M_{\sysname}&=64+32=96,\\
    M_{\mathrm{HK2\text{-}style}}&=M_{\mathrm{RAONT\text{-}RS}}=M_{\mathrm{HK1}}+16.
  \end{aligned}
\end{equation}
These sizes describe our hash-based commitments, not formal HK2's statistically
hiding primitive; \sysname stores an Ed25519 signature and verification key.

\begin{figure}[ht]
  \centering
  \begingroup
\begin{tikzpicture}
\begin{groupplot}[
  group style={group size=2 by 1,horizontal sep=0.12\linewidth},
  scale only axis,width=0.38\linewidth,height=2.7cm,
  xmin=4,xmax=128,ymin=0,
  xlabel={Shares $n$},ylabel={Bytes/share},
  title style={font=\footnotesize,yshift=10pt},
  tick label style={font=\footnotesize},
  x label style={font=\footnotesize},y label style={font=\footnotesize},
  scaled y ticks=false,clip=false,
]
\nextgroupplot[title={(a) $k=n-1$},ymax=155,xtick={4,32,64,96,128},ytick={0,48,96,144},legend to name=robustness-metadata-legend,legend style={legend columns=3,font=\footnotesize,legend cell align=left,column sep=4pt,draw=none}]
\addplot+[redfte,mark=*,mark size=0.65pt] coordinates {(4,48) (6,40) (8,42) (10,36) (12,44) (14,39) (16,45) (18,34) (20,38) (22,42) (24,46) (26,50) (28,54) (30,58) (32,62) (34,33) (36,35) (38,37) (40,39) (42,41) (44,43) (46,45) (48,47) (50,49) (52,51) (54,53) (56,55) (58,57) (60,59) (62,61) (64,63) (66,65) (68,67) (70,69) (72,71) (74,73) (76,75) (78,77) (80,79) (82,81) (84,83) (86,85) (88,87) (90,89) (92,91) (94,93) (96,95) (98,97) (100,99) (102,101) (104,103) (106,105) (108,107) (110,109) (112,111) (114,113) (116,115) (118,117) (120,119) (122,121) (124,123) (126,125) (128,127)};
\addlegendentry{HK1}
\addplot+[yellowfte,mark=*,mark size=0.65pt] coordinates {(4,64) (6,56) (8,58) (10,52) (12,60) (14,55) (16,61) (18,50) (20,54) (22,58) (24,62) (26,66) (28,70) (30,74) (32,78) (34,49) (36,51) (38,53) (40,55) (42,57) (44,59) (46,61) (48,63) (50,65) (52,67) (54,69) (56,71) (58,73) (60,75) (62,77) (64,79) (66,81) (68,83) (70,85) (72,87) (74,89) (76,91) (78,93) (80,95) (82,97) (84,99) (86,101) (88,103) (90,105) (92,107) (94,109) (96,111) (98,113) (100,115) (102,117) (104,119) (106,121) (108,123) (110,125) (112,127) (114,129) (116,131) (118,133) (120,135) (122,137) (124,139) (126,141) (128,143)};
\addlegendentry{HK2-style / RAONT-RS}
\addplot+[black,mark=none,mark size=0.65pt] coordinates {(4,96) (6,96) (8,96) (10,96) (12,96) (14,96) (16,96) (18,96) (20,96) (22,96) (24,96) (26,96) (28,96) (30,96) (32,96) (34,96) (36,96) (38,96) (40,96) (42,96) (44,96) (46,96) (48,96) (50,96) (52,96) (54,96) (56,96) (58,96) (60,96) (62,96) (64,96) (66,96) (68,96) (70,96) (72,96) (74,96) (76,96) (78,96) (80,96) (82,96) (84,96) (86,96) (88,96) (90,96) (92,96) (94,96) (96,96) (98,96) (100,96) (102,96) (104,96) (106,96) (108,96) (110,96) (112,96) (114,96) (116,96) (118,96) (120,96) (122,96) (124,96) (126,96) (128,96)};
\addlegendentry{\sysname}
\addplot+[redfte,dashed,mark=none,forget plot] coordinates {(4,64) (6,48) (8,42.66666667) (10,40) (12,38.4) (14,37.33333333) (16,36.57142857) (18,36) (20,35.55555556) (22,35.2) (24,34.90909091) (26,34.66666667) (28,34.46153846) (30,34.28571429) (32,34.13333333) (34,34) (36,33.88235294) (38,33.77777778) (40,33.68421053) (42,33.6) (44,33.52380952) (46,33.45454545) (48,33.39130435) (50,33.33333333) (52,33.28) (54,33.23076923) (56,33.18518519) (58,33.14285714) (60,33.10344828) (62,33.06666667) (64,33.03225806) (66,33) (68,32.96969697) (70,32.94117647) (72,32.91428571) (74,32.88888889) (76,32.86486486) (78,32.84210526) (80,32.82051282) (82,32.8) (84,32.7804878) (86,32.76190476) (88,32.74418605) (90,32.72727273) (92,32.71111111) (94,32.69565217) (96,32.68085106) (98,32.66666667) (100,32.65306122) (102,32.64) (104,32.62745098) (106,32.61538462) (108,32.60377358) (110,32.59259259) (112,32.58181818) (114,32.57142857) (116,32.56140351) (118,32.55172414) (120,32.54237288) (122,32.53333333) (124,32.52459016) (126,32.51612903) (128,32.50793651)};
\addplot+[yellowfte,dashed,mark=none,forget plot] coordinates {(4,80) (6,64) (8,58.66666667) (10,56) (12,54.4) (14,53.33333333) (16,52.57142857) (18,52) (20,51.55555556) (22,51.2) (24,50.90909091) (26,50.66666667) (28,50.46153846) (30,50.28571429) (32,50.13333333) (34,50) (36,49.88235294) (38,49.77777778) (40,49.68421053) (42,49.6) (44,49.52380952) (46,49.45454545) (48,49.39130435) (50,49.33333333) (52,49.28) (54,49.23076923) (56,49.18518519) (58,49.14285714) (60,49.10344828) (62,49.06666667) (64,49.03225806) (66,49) (68,48.96969697) (70,48.94117647) (72,48.91428571) (74,48.88888889) (76,48.86486486) (78,48.84210526) (80,48.82051282) (82,48.8) (84,48.7804878) (86,48.76190476) (88,48.74418605) (90,48.72727273) (92,48.71111111) (94,48.69565217) (96,48.68085106) (98,48.66666667) (100,48.65306122) (102,48.64) (104,48.62745098) (106,48.61538462) (108,48.60377358) (110,48.59259259) (112,48.58181818) (114,48.57142857) (116,48.56140351) (118,48.55172414) (120,48.54237288) (122,48.53333333) (124,48.52459016) (126,48.51612903) (128,48.50793651)};
\addlegendimage{greyfte,solid,mark=*}
\addlegendentry{Optimized per-hash}
\addlegendimage{greyfte,dashed,mark=none}
\addlegendentry{Packed (no omission)}
\draw[yellowfte,densely dotted,thin] (axis cs:82,0) -- (axis cs:82,111);
\node[font=\footnotesize,anchor=south east,yellowfte] at (axis cs:82,111) {$n=82$};
\addplot+[yellowfte,only marks,mark=*,mark size=1.5pt,forget plot] coordinates {(82,97)};
\draw[redfte,densely dotted,thin] (axis cs:98,0) -- (axis cs:98,137);
\node[font=\footnotesize,anchor=south east,redfte] at (axis cs:98,137) {$n=98$};
\addplot+[redfte,only marks,mark=*,mark size=1.5pt,forget plot] coordinates {(98,97)};
\nextgroupplot[title={(b) $k=n/2+1$},ymax=2200,xtick={4,32,64,96,128},ytick={0,1000,2000}]
\addplot+[redfte,mark=*,mark size=0.65pt] coordinates {(4,48) (6,80) (8,112) (10,144) (12,176) (14,208) (16,240) (18,272) (20,304) (22,336) (24,368) (26,400) (28,432) (30,464) (32,496) (34,528) (36,560) (38,592) (40,624) (42,656) (44,688) (46,720) (48,752) (50,784) (52,816) (54,848) (56,880) (58,912) (60,944) (62,976) (64,1008) (66,1040) (68,1072) (70,1104) (72,1136) (74,1168) (76,1200) (78,1232) (80,1264) (82,1296) (84,1328) (86,1360) (88,1392) (90,1424) (92,1456) (94,1488) (96,1520) (98,1552) (100,1584) (102,1616) (104,1648) (106,1680) (108,1712) (110,1744) (112,1776) (114,1808) (116,1840) (118,1872) (120,1904) (122,1936) (124,1968) (126,2000) (128,2032)};
\addplot+[yellowfte,mark=*,mark size=0.65pt] coordinates {(4,64) (6,96) (8,128) (10,160) (12,192) (14,224) (16,256) (18,288) (20,320) (22,352) (24,384) (26,416) (28,448) (30,480) (32,512) (34,544) (36,576) (38,608) (40,640) (42,672) (44,704) (46,736) (48,768) (50,800) (52,832) (54,864) (56,896) (58,928) (60,960) (62,992) (64,1024) (66,1056) (68,1088) (70,1120) (72,1152) (74,1184) (76,1216) (78,1248) (80,1280) (82,1312) (84,1344) (86,1376) (88,1408) (90,1440) (92,1472) (94,1504) (96,1536) (98,1568) (100,1600) (102,1632) (104,1664) (106,1696) (108,1728) (110,1760) (112,1792) (114,1824) (116,1856) (118,1888) (120,1920) (122,1952) (124,1984) (126,2016) (128,2048)};
\addplot+[black,mark=none,mark size=0.65pt] coordinates {(4,96) (6,96) (8,96) (10,96) (12,96) (14,96) (16,96) (18,96) (20,96) (22,96) (24,96) (26,96) (28,96) (30,96) (32,96) (34,96) (36,96) (38,96) (40,96) (42,96) (44,96) (46,96) (48,96) (50,96) (52,96) (54,96) (56,96) (58,96) (60,96) (62,96) (64,96) (66,96) (68,96) (70,96) (72,96) (74,96) (76,96) (78,96) (80,96) (82,96) (84,96) (86,96) (88,96) (90,96) (92,96) (94,96) (96,96) (98,96) (100,96) (102,96) (104,96) (106,96) (108,96) (110,96) (112,96) (114,96) (116,96) (118,96) (120,96) (122,96) (124,96) (126,96) (128,96)};
\addplot+[redfte,dashed,mark=none,forget plot] coordinates {(4,64) (6,96) (8,128) (10,160) (12,192) (14,224) (16,256) (18,288) (20,320) (22,352) (24,384) (26,416) (28,448) (30,480) (32,512) (34,544) (36,576) (38,608) (40,640) (42,672) (44,704) (46,736) (48,768) (50,800) (52,832) (54,864) (56,896) (58,928) (60,960) (62,992) (64,1024) (66,1056) (68,1088) (70,1120) (72,1152) (74,1184) (76,1216) (78,1248) (80,1280) (82,1312) (84,1344) (86,1376) (88,1408) (90,1440) (92,1472) (94,1504) (96,1536) (98,1568) (100,1600) (102,1632) (104,1664) (106,1696) (108,1728) (110,1760) (112,1792) (114,1824) (116,1856) (118,1888) (120,1920) (122,1952) (124,1984) (126,2016) (128,2048)};
\addplot+[yellowfte,dashed,mark=none,forget plot] coordinates {(4,80) (6,112) (8,144) (10,176) (12,208) (14,240) (16,272) (18,304) (20,336) (22,368) (24,400) (26,432) (28,464) (30,496) (32,528) (34,560) (36,592) (38,624) (40,656) (42,688) (44,720) (46,752) (48,784) (50,816) (52,848) (54,880) (56,912) (58,944) (60,976) (62,1008) (64,1040) (66,1072) (68,1104) (70,1136) (72,1168) (74,1200) (76,1232) (78,1264) (80,1296) (82,1328) (84,1360) (86,1392) (88,1424) (90,1456) (92,1488) (94,1520) (96,1552) (98,1584) (100,1616) (102,1648) (104,1680) (106,1712) (108,1744) (110,1776) (112,1808) (114,1840) (116,1872) (118,1904) (120,1936) (122,1968) (124,2000) (126,2032) (128,2064)};
\draw[black,densely dotted,thin] (axis cs:8,0) -- (axis cs:8,1750);
\node[font=\footnotesize,anchor=south west,black] at (axis cs:8,1750) {Both: $n=8$};
\addplot+[redfte,only marks,mark=*,mark size=1.5pt,forget plot] coordinates {(8,112)};
\addplot+[yellowfte,only marks,mark=*,mark size=1.5pt,forget plot] coordinates {(8,128)};
\end{groupplot}
\end{tikzpicture}
\par\smallskip
\pgfplotslegendfromname{robustness-metadata-legend}
\endgroup
  \caption{Matched-tolerance authentication costs. Solid: optimized per-hash;
    dashed: ideal packing without local omission. Labels mark first strict wins
    for \sysname at even $n$.}
  \label{fig:robustness-metadata}
\end{figure}

\emph{High threshold ($k=n-1$).}
Here $d=n-2$. For $n\geq34$, the optimized costs are $n-1$ and $n+15$ bytes,
giving first even-$n$ wins at $98$ and $82$. This is a padding artifact:
ideal packing of all $n$ values without local omission costs $32n/(n-2)$ bytes
(plus $16$ for openings), below $96$ for $n\geq4$.

\emph{Near-majority threshold ($k=n/2+1$, even $n$).}
Here $d=2$: HK1 needs $16(n-1)$ bytes and HK2-style/RAONT-RS $16n$.
Both grow linearly; \sysname is strictly smaller from even $n=8$, with a tie
against HK2-style/RAONT-RS at $n=6$. Packing also remains linear.
\FloatBarrier

\end{document}